\pgfplotsset{compat=1.7,
emphasize/.code args={#1:#2with#3}{
    \pgfplotsextra{
            \draw[fill=#3] ({axis cs:#1,0} |- {axis description cs:0,0}) 
            rectangle ({axis cs:#2,0} |- {axis description cs:0,1});
        }
    }
}
\def\BibTeX{{\rm B\kern-.05em{\sc i\kern-.025em b}\kern-.08em
    T\kern-.1667em\lower.7ex\hbox{E}\kern-.125emX}}
\newtheorem{definition}{Definition}
\newtheorem{lemma}{Lemma}
\newtheorem{proposition}{Proposition}
\newtheorem{theorem}{Theorem}
\newtheorem{example}{Example}
\newtheorem{corollary}{Corollary}
\renewcommand{\baselinestretch}{0.97}
\begin{document}

\title{Secret Sharing in the Rank Metric}


 \author{\small
   \IEEEauthorblockN{Johan Vester Dinesen\IEEEauthorrefmark{1}\thanks{This publication has emanated from research conducted with the financial support of the European Union MSCA Doctoral Networks, (HORIZON-MSCA-2021-DN-01, Project 101072316).
},
                     Eimear Byrne\IEEEauthorrefmark{2},
                     Ragnar Freij-Hollanti\IEEEauthorrefmark{1},
                     and Camilla Hollanti\IEEEauthorrefmark{1}} \\
              \IEEEauthorblockA{\IEEEauthorrefmark{1}%
                    Department of Mathematics and Systems Analysis, Aalto University, Finland,
                first.last@aalto.fi} \\
                \IEEEauthorblockA{\IEEEauthorrefmark{2}%
                     School of Mathematics and Statistics, University College Dublin, Ireland, 
            ebyrne@ucd.ie}
 }

\maketitle

\begin{abstract}
The connection between secret sharing and matroid theory is well established. In this paper, we generalize the concepts of secret sharing and matroid ports to $q$-polymatroids. Specifically, we introduce the notion of an access structure on a vector space, and consider properties related to duality, minors, and the relationship to $q$-polymatroids. Finally, we show how rank-metric codes give rise to secret sharing schemes within this framework.

\end{abstract}

\section{Introduction}

Secret sharing was independently introduced by Blakley \cite{Blakley} and Shamir \cite{shamir79} in 1979. It is a cryptographic primitive where a secret is divided into multiple shares, which are then distributed among participants such that only certain authorized collections of participants can reconstruct the secret.  One way to construct secret sharing schemes is via classical linear error-correcting codes. 
In particular, Massey demonstrated that the minimal codewords of the dual code determine the access structure of such schemes \cite{Massey1999MinimalCA}. Lehman introduced matroid ports in 1964 \cite{Lehman} and their connection to secret sharing is well established \cite{farre_padro2007}. Since linear codes induce matroids \cite{Oxley} this further establishes the link between secret sharing and linear codes. Matroids and polymatroids also have an abundance of applications in distributed data storage \cite{westerback-book}.

Linear rank-metric codes were originally introduced by Delsarte \cite{DELSARTE1978226} as sets of matrices over a finite field $\mathbb{F}_q$, where the distance between two matrices is defined as the rank of their difference. These codes were later rediscovered by Roth \cite{Roth} and Gabidulin \cite{Gabidulin}. Their properties have been widely studied in the context of random linear network coding \cite{4567581, 4608992}, {including coset coding to secure the network from wiretappers \cite{Universal,Emina}. In addition, rank-metric codes have been proposed for post-quantum encryption and digital signature schemes. 
  Notably, ROLLO and RQC are examples of rank-metric-based cryptosystem submissions to the NIST Post-Quantum Cryptography Standardization process \cite{Rollo, RQC}. However, ROLLO and RQC did not advance in the PQC standardization process, as their security analyses required further scrutiny. Despite this, NIST recognizes that rank-based cryptography remains competitive and recommends continued research in this area \cite{257451}. As a result, rank-metric cryptanalysis remains an active area of research; see \cite{cryptoeprint:2022/1031, 10.1109/TC.2022.3225080,10.1007/978-3-030-98365-91,RYDE}. Lastly, $q$-polymatroids, which generalize (poly)matroids, are well known to arise from rank-metric codes \cite{Gorla_2019, Johnsen_2020, Shiromoto}. Consequently, $q$-polymatroids serve as a valuable tool to gain insights into rank-metric codes.

In this paper, we consider secret sharing in the rank-metric, one application of which is the mentioned wiretap network coding. We generalize access structures on finite sets to vector spaces and extend the concept of matroid ports to that of $q$-polymatroid ports. We establish notions of duality and minors of access structures on vector spaces, and demonstrate their relationships to the duality and minors of $q$-polymatroids and $q$-polymatroid ports. Furthermore, we explore perfect schemes, minimum gaps, and threshold structures within this framework. We also investigate how rank-metric codes and their associated $q$-polymatroids give rise to secret sharing schemes. In particular, we show that maximum rank distance (MRD) rank-metric codes yield perfect threshold schemes within the framework presented in this paper. Finally, we demonstrate that the rank function of the $q$-polymatroid induced by a rank-metric code can be determined using the entropy of random variables associated with the same code. 
\section{Preliminaries}

In this section we briefly summarize well-known definitions and results needed in the paper.

\subsection{Access Structures and Secret Sharing}

A general way to define linear secret sharing \cite{shamir79} is via (multi-target) monotone span programs, see \cite{beimel2011secret} for more details. In terms of linear codes, secret sharing can be described by nested coset coding similarly to wiretap coding, see \cite{chen2007secure, duursma2010coset, kurihara2012secret, martinez2018communication}. Secret sharing is also a crucial component in private information retrieval \cite{chor1995private,Sun2016,Banawan2018,freij2017private}, see \cite{song2022equivalence,okkoAG}. 
For a general survey on secret sharing schemes, we refer to \cite{beimel2011secret}.

In a secret sharing scheme, a dealer distributes shares to players such that only certain collections of players are able to reconstruct the secret. This set of collections is called the \emph{reconstructing structure}. Conversely, certain collections of players gain zero information about the secret. This set is called the \emph{privacy structure}. 

Next, let us briefly describe how Reed--Solomon codes yield secret sharing schemes. Let $0 \leq k \leq n \leq q$ be integers and denote the vector space of polynomials over a finite field $\mathbb{F}_q$ with degree at most $k-1$ by  $\mathbb{F}_q[X]_{<k}$.  A Reed--Solomon code can be defined as 
\[
\mathrm{RS}_{n,k}(\alpha) = \{ (f(\alpha_0),\ldots,f(\alpha_{n-1})) \,|\, f\in \mathbb{F}_q[X]_{<k} \},
\]
where $\alpha = (\alpha_0,\ldots,\alpha_{n-1}) \in \mathbb{F}_q^n$ and $\alpha_i \neq \alpha_j$ when $i\neq j$. Assume $\alpha_0 = 0$, and let $P_0 = \{0\}$ denote the dealer and $P = \{1, \ldots, n-1\}$ denote the set of players. A secret sharing scheme can be constructed as follows: The dealer chooses a secret $s_0 \in \mathbb{F}_q$ to share and uniformly at random chooses a codeword among $c = (s_0,c_1,\ldots,c_{n-1})\in \mathrm{RS}_{n,k}(\alpha)$. They then transmit $c_i$ to player $i \in P$. The secret can then be reconstructed by polynomial interpolation. The reconstructing structure then consists of any $B\subseteq P$ with $|B|\geq k$. The privacy structure consists of $B\subseteq P$ with $|B|\leq k-1$. Thus, a coalition $B\subseteq P$ can reconstruct $s_0$ if and only if it contains an information set of $\mathrm{RS}_{n,k}(\alpha)$.

We now consider access structures on finite sets and examine how these relate to secret sharing schemes; see \cite{cryptoeprint:2009/141,Cramer_Damgaard_Nielsen_2015}. Let $P$ be a finite set, and let $G, A \subseteq 2^P$. If $V \in G$ and $V' \subseteq P$ with $V \subseteq V'$ implies $V' \in G$, we call $G$ a \emph{monotone} set. Dually, if $W \in A$ and $W' \subseteq P$ with $W' \subseteq W$ implies $W' \in A$, we call $A$ an \emph{anti-monotone} set. These objects are also referred to in the literature as up-sets and down-sets, or as filters and ideals\cite{blyth2005lattices}.
\begin{definition}
Let $P$ be a finite set. Let $G, A\subseteq 2^P$ such that $G \cap A=\varnothing$, $G$ is monotone, and $A$ is an anti-monotone. Then $(G,A)$ is an \emph{access structure} on $P$.
\end{definition}

Clearly any reconstructing structure of a secret sharing scheme should be a monotone set in the player set, while the privacy structure is anti-monotone. An access structure $(G,A)$ is a $k$\emph{-threshold structure} if $G = \{B \subseteq P \,|\, k \leq |B|\}$. It is \emph{perfect} if $G \cup A=2^P$.  The previous example of a secret sharing scheme induced by the Reed--Solomon code therefore yields a perfect $k$-threshold structure.

\subsection{$q$-Polymatroids}
We now introduce the necessary definitions and results on $q$-polymatroids for this paper. For an overview of $q$-polymatroids, see \cite{DBLP:journals/corr/abs-2104-06570, BYRNE2024105799}.

We first establish some notation. We let $q$ denote a fixed prime power and let $\mathbb{F}_q$ be the finite field of order $q$. For any vector space $A$ over $\mathbb{F}_q$, let 
$\mathcal{L}(A)$ denote the poset of subspaces of $A$, ordered with respect to subspace-inclusion $\leq$. We let $E$ denote a fixed $n$-dimensional vector space over $\mathbb{F}_q$. For any $i\in [n]$,  $\mathbf{e}_i \in \mathbb{F}_q^n$ denotes the $i$'th standard basis vector.

For any vector space $A$ over $\mathbb{F}_q$ and $V\in \mathcal{L}(A)$,  $V^{\perp A}$ denotes the orthogonal complement of $V$ with respect to some fixed non-degenerate symmetric bilinear form. That is, if $\langle \, ,\, \rangle \colon A\times A \rightarrow \mathbb{F}_q$ is a fixed non-degenerate symmetric bilinear form then 
\[
    V^{\perp A} = \{w\in A \,|\, \langle v,w \rangle = 0 \text{ for all } w\in A \}.
\]
We write $V^\bot$ to denote $V^{\bot E}$.
\begin{definition} A $q$-polymatroid is a pair $\mathcal{M} = (E,\rho)$ where $\rho\colon \mathcal{L} (E) \rightarrow \mathbb{Q}_{\geq 0}$ is a rank function such that
\begin{enumerate}
    \item[(R1)] Boundedness: $0\leq \rho(V) \leq \dim V$ for all $V\in \mathcal{L}(E)$.
    \item[(R2)] Monotonicity: $\rho(V)\leq \rho(W)$ for all $V,W\in \mathcal{L}(E)$ with $V\leq W$.
    \item[(R3)] Submodularity: $\rho(V + W) + \rho(V\cap W ) \leq \rho(V) + \rho(W)$ for all $V,W\in \mathcal{L}(E)$.
\end{enumerate}
\end{definition}

A $q$-polymatroid such that $\rho\colon \mathcal{L}(E)\rightarrow \mathbb{N}_0$ is called a $q$-matroid. Furthermore, for any $q$-polymatroid $\mathcal{M} = (E,\rho)$ and for any $V,W\in \mathcal{L}(E)$ define $$\rho(V\,|\,W) \coloneqq \rho(V+W) - \rho(V)$$ as the conditional rank of $V$ given $W$.

There are many cryptomorphic definitions for $q$-matroids; however, it remains an open research problem whether these can all be generalized to $q$-polymatroids, see \cite{BYRNE2022149}. As such, the following definitions are provided specifically for $q$-matroids.
\begin{definition} Let $\mathcal{M} = (E,\rho)$ be a $q$-matroid and $V\in\mathcal{L}(E)$. 
\begin{enumerate}
    \item $V$ is an independent space if $\rho(V) = \dim V$.
    \item $V$ is a dependent space if it is not independent.
    \item $V$ is a circuit if it is dependent and all proper subspaces of $V$ are independent.
    \item $V$ is a basis if it is independent and is not properly contained in any independent spaces.
\end{enumerate}
\end{definition}

Lastly, we define duality, equivalence, and minors of $q$-polymatroids.

\begin{definition}
    Let $\mathcal{M} = (E,\rho)$ be a $q$-polymatroid. Define $\rho^*(V) \coloneqq \dim V - \rho(E) + \rho(V^{\perp})$ for all $V\in \mathcal{L}(E)$. Then $\mathcal{M}^* \coloneqq (E,\rho^*)$ is a $q$-polymatroid called the dual of $\mathcal{M}$, and $(\mathcal{M}^*)^* = \mathcal{M}$.
\end{definition}

\begin{definition} \label{def:equivalentmatroids}
Let $\mathcal{M}_i = (E_i, \rho_i)$ be $q$-polymatroids for $i=1,2$. Then $\mathcal{M}_1$ and $\mathcal{M}_2$ are equivalent, denoted $\mathcal{M}_1 \simeq \mathcal{M}_2$, if there exists an $\mathbb{F}_q$-linear  isomorphism $\phi \colon \mathcal{L}(E_1) \rightarrow \mathcal{L}(E_2)$ such that $\rho_2(\phi(V)) = \rho_1(V)$ for all $V\in \mathcal{L}(E_1)$. 
\end{definition}

\begin{definition} Let $\mathcal{M} = (E,\rho)$ be a $q$-polymatroid and $Z\in \mathcal{L}(E)$. Then $\mathcal{M}|_Z \coloneqq (Z,\rho|_{Z})$ where $\rho|_{Z}(V) \coloneqq \rho(V)$ for all $V\in \mathcal{L}(Z)$ is a $q$-polymatroid, called the restriction of $\mathcal{M}$ to $Z$. Additionally, let $\pi\colon E\rightarrow E/Z$ denote the canonical projection. Then $\mathcal{M}/Z \coloneqq (E/Z,\rho_{E/Z})$ where $\rho_{E/Z}(V) \coloneqq \rho(\pi^{-1}(V))-\rho(Z)$ is a $q$-polymatroid, called the contraction of $Z$ from $\mathcal{M}$.
\end{definition}

\subsection{Rank-Metric Codes}
Let us recall some standard definitions and properties of rank-metric codes. For an overview of rank-metric codes, we refer to \cite{bartz2022rankmetriccodesapplications}. 

A \emph{rank-metric code} is an $\mathbb{F}_q$-linear subspace $\mathcal{C}\in \mathcal{L}(\mathbb{F}_q^{n\times m})$ equipped with the rank distance $d_{\mathrm{rk}}(X,Y) = \mathrm{rk}(X-Y)$ for any $X,Y\in \mathcal{C}$.
         The dual code of $\mathcal{C}$ is $$\mathcal{C}^\perp = \{X\in \mathbb{F}_q^{n\times m} \,|\, \mathrm{tr}(XY^\top) = 0 \text{ for all } Y\in \mathcal{C}\},$$ where $\mathrm{tr}$ is the trace of a matrix.
         For any rank-metric code $\mathcal{C}$ we define the minimum rank distance as         
     $$d_{\mathrm{rk}}(\mathcal{C}) = \min\{\mathrm{rk}(X) \,|\,X\in \mathcal{C}, X\neq 0\},$$  and $\mathcal{C}$ then satisfies the rank-metric Singleton bound $$\dim \mathcal{C}\leq \max\{m,n\}(\min\{m,n\}-d_{\mathrm{rk}}(\mathcal{C}) +1).$$ If $\mathcal{C}$ meets the Singleton bound with equality it is called a maximum rank distance (MRD) code. If $\mathcal{C}$ is MRD then $\mathcal{C}^\perp$ is MRD with $d_{\mathrm{rk}}(\mathcal{C}^\perp) = \min\{m,n\}-d_{\mathrm{rk}}(\mathcal{C})+2$.
         Let $V\leq \mathbb{F}_q^n$. The \emph{shortening of $\mathcal{C}$ with respect to $V$} is $$\mathcal{C}(V) = \{X \in \mathcal{C}\,|\, \mathrm{colsp}(X) \leq V\}.$$
         The induced $q$-polymatroid of $\mathcal{C}$ is $\mathcal{M}_{\mathcal{C}} = (\mathbb{F}_q^n, \rho_{\mathcal{C}})$ where $$\rho_{\mathcal{C}}(V) \coloneqq (\dim \mathcal{C} - \dim \mathcal{C}(V^{\perp}))/m$$ for any $V\in \mathcal{L}(\mathbb{F}_q^n)$ and it satisfies $\mathcal{M}_{\mathcal{C}}^*= \mathcal{M}_{\mathcal{C}^\perp}$. Let $\mathcal{M} = (E,\rho)$ be a $q$-polymatroid, where $E$ is an $n$-dimensional $\mathbb{F}_q$-vector space. Then $\mathcal{M}$ is \emph{representable} if there exists a rank-metric code $\mathcal{C}\in \mathcal{L}(\mathbb{F}_q^{n\times m})$ such that $\mathcal{M}\simeq \mathcal{M}_{\mathcal{C}}$.

\section{Secret Sharing With Rank-Metric Codes}

\textcolor{black}{We begin this section by motivating how secret-sharing schemes can be defined using rank-metric codes, formalizing these notions in the subsequent subsections. Let $P_0,P\in \mathcal{L}(\mathbb{F}_q^n)$ such that $P_0 \oplus P = \mathbb{F}_q^n$, and let $\mathcal{C}\leq \mathcal{L}(\mathbb{F}_q^{n\times m})$ be a rank-metric code with $\mathcal{C}(P_0^\perp)\neq \mathcal{C}$. For any $V\in \mathcal{L}(\mathbb{F}_q^n)$ let $G_V$ be a fixed public generator matrix of full rank with $\mathrm{rowsp}(G_V) = V$. The dealer corresponds to $P_0$ and the player space corresponds to $P$. The players are the $1$-dimensional spaces of $P$, resulting in a total of $\frac{q^{\dim P}-1}{q-1}$ players. The dealer selects a secret $S\in G_{P_0}\mathcal{C}$ and uniformly chooses $X\in \mathcal{C}$ such that $G_{P_0}X = S$. The condition $\mathcal{C}(P_0^\perp)\neq \mathcal{C}$ ensures $\dim G_{P_0}\mathcal{C} \neq 0$. Each player $p_i\in \mathcal{L}(P)$ receives $G_{p_i}X$ from the dealer. A coalition $V \in \mathcal{L}(P)$ can reconstruct $G_{P_0}X$ using $G_V X$ if $G_V Y_1 = G_V Y_2$ implies $G_{P_0}Y_1 = G_{P_0}Y_2$ for any $Y_1,Y_2\in \mathcal{C}$. By linearity, this is equivalent to $G_V Z = 0\Rightarrow G_{P_0}Z = 0$ for all $Z \in \mathcal{C}$. This, in turn, is equivalent to $\rho_{\mathcal{C}}(V) = \rho_{\mathcal{C}}(V + P_0)$. Furthermore, for $V\in \mathcal{L}(E)$, the number of feasible $G_{P_0}X$ given $G_{V}X$ is $q^{m \rho(P_0 \,|\, V)}$. Thus, the probability that the coalition $V$ correctly determines the secret by {\em uniformly random guessing} is $q^{-m \rho_{\mathcal{C}}(P_0 \,|\, V)}$.}

\begin{example} \label{exmp}Consider $\mathcal{C} = \langle M_1,M_2,M_3,M_4,M_5,M_6\rangle_{\mathbb{F}_2}$ where 
    \begin{align*}M_1 =
    \begin{bsmallmatrix}
1 & 0 & 0 & 0 & 0 & 0 \\
0 & 1 & 0 & 0 & 1 & 0 \\
0 & 0 & 0 & 0 & 0 & 1 \\
0 & 1 & 0 & 1 & 0 & 1
\end{bsmallmatrix}, M_2 =
\begin{bsmallmatrix}
0 & 1 & 0 & 0 & 0 & 0 \\
1 & 0 & 0 & 1 & 1 & 1 \\
0 & 0 & 1 & 1 & 0 & 1 \\
1 & 1 & 1 & 1 & 1 & 1
\end{bsmallmatrix}, M_3=
\begin{bsmallmatrix}
0 & 0 & 1 & 0 & 0 & 0 \\
1 & 1 & 0 & 1 & 0 & 1 \\
0 & 1 & 0 & 1 & 1 & 0 \\
0 & 0 & 0 & 0 & 1 & 0
\end{bsmallmatrix},
\end{align*}
\begin{align*} 
M_4 =
\begin{bsmallmatrix}
0 & 0 & 0 & 1 & 0 & 0 \\
1 & 1 & 0 & 0 & 0 & 0 \\
1 & 1 & 1 & 0 & 0 & 1 \\
1 & 0 & 1 & 1 & 1 & 1
\end{bsmallmatrix}, M_5=
\begin{bsmallmatrix}
0 & 0 & 0 & 0 & 1 & 0 \\
0 & 1 & 1 & 1 & 1 & 1 \\
0 & 0 & 1 & 1 & 0 & 0 \\
1 & 0 & 1 & 1 & 0 & 0
\end{bsmallmatrix}, M_6=
\begin{bsmallmatrix}
0 & 0 & 0 & 0 & 0 & 1 \\
0 & 1 & 0 & 1 & 1 & 0 \\
0 & 1 & 1 & 1 & 0 & 1 \\
0 & 0 & 1 & 0 & 1 & 0
\end{bsmallmatrix}.
\end{align*}

Let $P_0 = \langle \begin{bsmallmatrix}
1 & 0 & 0 & 1
\end{bsmallmatrix}\rangle_{\mathbb{F}_2}$ and $P = \langle \mathbf{e}_2,\mathbf{e}_3,\mathbf{e}_4\rangle_{\mathbb{F}_2}$. Then $P_0 \oplus P = \mathbb{F}_2^4$ and $\mathcal{C}(P_0^\perp) \neq \mathcal{C}$. As the scheme is over $\mathbb{F}_2$, there is only a single choice for generator matrices $G_V$ for any $V\in \mathcal{L}(\mathbb{F}_2^4)$ with $\dim V = 1$, and the dealer need not transmit these. The dealer chooses a secret $S\in G_{P_0}\mathcal{C}$ and $X\in \mathcal{C}$ such that
\[
X =
\begin{bsmallmatrix}
    0 & 1 & 1 & 1 & 0 & 0 \\
1 & 0 & 0 & 0 & 1 & 0 \\
1 & 0 & 0 & 0 & 1 & 0 \\
0 & 1 & 0 & 0 & 1 & 0
\end{bsmallmatrix}, \: G_{P_0} X = S= \begin{bsmallmatrix}
    0 & 0 & 1 & 1 & 1 & 0
\end{bsmallmatrix}.
\]
The players corresponding to $p_1 = \langle \begin{bsmallmatrix}
    0 & 0 & 1 & 0 
\end{bsmallmatrix}\rangle_{\mathbb{F}_2}$ and $p_2 =\langle \begin{bsmallmatrix}
    0 & 1 & 1 & 0 
\end{bsmallmatrix}\rangle_{\mathbb{F}_2}$ then receive their corresponding shares $G_{p_1}X = \begin{bsmallmatrix}
    1 & 0 & 0 & 0 & 1 & 0
\end{bsmallmatrix}$ and $G_{p_2}X = \begin{bsmallmatrix}
    1 & 1 & 0 & 0 & 0 & 0 
\end{bsmallmatrix}$. If $p_i$ then wishes to reconstruct the secret they determine the space
\[
    \Omega(G_{p_i},X) = \{G_{P_0}Y  \,|\, Y\in \mathcal{C}\colon G_{p_i} Y = G_{p_i}X \}.
\]
If $|\Omega(p_i,X)| = 1$ the player will have uniquely reconstructed the secret as $G_{P_0}X \in \Omega(p_i,X)$ for all $p_i$ and $X$. However, neither $p_1$ or $p_2$ are able to reconstruct the secret as $|\Omega(G_{p_1},X)| = 2$ and $|\Omega(G_{p_2},X)|=4$, meaning that both players gain partial information of the secret as $|G_{P_0} \mathcal{C}| = 32$. This is also seen by $\rho_{\mathcal{C}}$ as $\rho_{\mathcal{C}}(p_1) = 5/6$ and $\rho_{\mathcal{C}}(p_2) = 2/3$ while $\rho_{\mathcal{C}}(p_1 + P_0) = \rho_{\mathcal{C}}(p_2 + P_0) =1$. However, $p_1$ is stronger in the sense that they obtain more information than $p_2$, also seen by $\rho_{\mathcal{C}}(p_1) > \rho_{\mathcal{C}}(p_2)$. If $p_1$ and $p_2$ collude, they pool their shares to obtain
 \[
G_{p_1+p_2} = \begin{bsmallmatrix}
    G_{p_1} \\ G_{p_2}
\end{bsmallmatrix} = \begin{bsmallmatrix}
    1 & 0 & 0 & 0 & 1 & 0 \\
    1 & 1 & 0 & 0 & 0 & 0
\end{bsmallmatrix}.
\]
As $|\Omega(G_{p_1+p_2},X)| = 1$, the coalition $p_1+p_2$ is able to reconstruct or, equivalently, $\rho_{\mathcal{C}}(P_0 \,|\, p_1 + p_2) = 0$.
\end{example}

\subsection{Access Structures on Subspaces}
In the following, we generalize access structures on finite sets to that of vector spaces. For sets $\Gamma, \mathcal{A} \subseteq \mathcal{L}(E)$, we naturally extend the notions of monotonicity and anti-monotonicity of finite sets to $\mathcal{L}(E)$ with respect to subspace inclusion. Lastly, two monotone sets, or two anti-monotone sets, $H_1, H_2$, are \emph{equivalent} if there exists a $\mathbb{F}_q$-linear monotone bijection $f: H_1 \rightarrow H_2$ with a monotone inverse.
\begin{definition}
Let $\Gamma,\mathcal{A}\subseteq \mathcal{L}(E)$ such that $\Gamma \cap \mathcal{A} = \varnothing$, $\Gamma$ is monotone and $\mathcal{A}$ is anti-monotone. Then $\mathbf{S} =(\Gamma,\mathcal{A})$ is an access structure on $\mathcal{L}(E)$.
\end{definition}

Henceforth, an access structure will always be on subspaces unless specified.

\begin{definition}
Let $\mathbf{S} = (\Gamma,\mathcal{A})$ be an access structure on $\mathcal{L}(E)$.
\begin{enumerate}
    \item $\mathbf{S}$ is degenerate if $\Gamma = \varnothing$ or $\mathcal{A}=\varnothing$.
    \item $\Gamma_{\min} = \{V\in \Gamma\,|\, W< V \Rightarrow W\notin \Gamma\}$ is the minimal reconstructing structure.
    \item $\mathbf{S}$ is perfect if $\Gamma \cup \mathcal{A}= \mathcal{L}(E)$. 
    \item The minimum gap of $\mathbf{S}$ is $$g(\mathbf{S}) = \min \{ \dim(V/W) \,|\, (V,W)\in \Gamma \times \mathcal{A} \colon W \leq V\}.$$
    \item $\mathbf{S}$ is a $k$-threshold structure if for some $k\in \mathbb{N}$ then $$\Gamma = \{V\in \mathcal{L}(E) \,|\, \dim V \geq k\}.$$
\end{enumerate}
\end{definition}
A perfect access structure $\mathbf{S}$ satisfies $g(\mathbf{S})=1$ and it is entirely determined by $\Gamma_{\min}$.

\begin{definition}
Let $\mathbf{S}_i = (\Gamma_i,\mathcal{A}_i)$ be access structures on $\mathcal{L}(E_i)$ for $i=1,2$. If there exists an $\mathbb{F}_q$-linear monotone bijection $f \colon \mathcal{L}(E_1) \rightarrow \mathcal{L}(E_2)$ with a monotone inverse such that $f(\Gamma_1) = \Gamma_2$ and $f(\mathcal{A}_1) = \mathcal{A}_2$, then $\mathbf{S}_1$ and $\mathbf{S}_2$ are equivalent, denoted by $\mathbf{S}_1\simeq \mathbf{S}_2$.
\end{definition}

We now define minors on monotone and anti-monotone sets. These definitions are then naturally extended to minors on access structures. The mutual duality between restriction and contraction is given in Theorem \ref{thm:minors}, analogous to the case of $q$-polymatroids \cite{BYRNE2024105799,DBLP:journals/corr/abs-2104-06570}.

\begin{definition} Let $H\subseteq \mathcal{L}(E)$ be a monotone or anti-monotone set and $Z\in \mathcal{L}(E)$. Then 
\begin{enumerate}
    \item $\overline{H} = \{V \in \mathcal{L}(E)\,|\, V\notin H\}$ is the complement.
    \item $H^*=\{ V\in \mathcal{L}(E) \,|\, V^{\perp}\in H \}$ is the dual.
    \item $H|_Z = \{ V \in \mathcal{L}(Z)\,|\, V\in H\}$ is the restriction to $Z$.
    \item $H/Z = \{V \in \mathcal{L}(E/Z) \,|\, \pi^{-1}(V)\in H\}$ is the contraction by $Z$, where $\pi\colon E\rightarrow E/Z$ is the canonical quotient map.
\end{enumerate}
\end{definition}
The dual of a monotone set is clearly anti-monotone, and vice-versa. Similarly, monotonicity and anti-monotonicity is preserved under restriction and contraction.

\begin{definition} Let $\mathbf{S}=(\Gamma,\mathcal{A})$ be an access structure on $\mathcal{L}(E)$ and $Z\in \mathcal{L}(E)$. Then
\begin{enumerate}
    \item $\mathbf{S}^* = (\mathcal{A}^*,\Gamma^*)$ is the dual access structure.
    \item $\mathbf{S} |_Z = (\Gamma|_Z,\mathcal{A}|_Z)$ is the restriction to $Z$.
    \item $\mathbf{S}/Z = (\Gamma/Z,\mathcal{A}/Z)$ is the contraction by $Z$.
\end{enumerate}
\end{definition}

Note that when dualizing $\mathbf{S}^*$, the dualization of $\Gamma$ and $\mathcal{A}$ is with respect to the same non-degenerate symmetric bilinear form. Additionally, the equivalence class of $\mathbf{S}^*$ is independent of the choice of non-degenerate symmetric bilinear form.

 Dualization and taking minors of access structures also yield access structures. In particular, $\mathbf{S}|_Z$ is an access structure on $\mathcal{L}(Z)$ and $\mathbf{S}/Z$ is an access structure on $\mathcal{L}(E/Z)$. The contraction by $Z$ essentially describes how the scheme changes when the information held by $Z$ is given to all players. Lastly, the minors of perfect access structures are also perfect.

\begin{theorem} \label{thm:minors}
Let $\mathbf{S}$ be an access structure on $\mathcal{L}(E)$. Then
\[
    (\mathbf{S}/Z)^* \simeq \mathbf{S}^*|_{Z^{\perp}} \quad \text{and} \quad(\mathbf{S}| _{Z^{\perp}})^* \simeq \mathbf{S}^*/Z.
\]
\end{theorem}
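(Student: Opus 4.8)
The plan is to prove the two equivalences by exhibiting an explicit $\mathbb{F}_q$-linear monotone bijection and chasing the definitions; by the self-duality relation $(\mathbf{S}^*)^* = \mathbf{S}$ it suffices to establish the first one, $(\mathbf{S}/Z)^* \simeq \mathbf{S}^*|_{Z^\perp}$, and then apply it to $\mathbf{S}^*$ in place of $\mathbf{S}$ (noting $(Z^\perp)^\perp = Z$) to obtain the second. Both sides of the first equivalence are access structures on spaces of the same dimension: $\mathbf{S}/Z$ lives on $\mathcal{L}(E/Z)$, so $(\mathbf{S}/Z)^*$ lives on $\mathcal{L}(E/Z)$ as well, while $\mathbf{S}^*|_{Z^\perp}$ lives on $\mathcal{L}(Z^\perp)$, and $\dim(E/Z) = \dim Z^\perp$.

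First I would fix the $\mathbb{F}_q$-linear isomorphism $\psi \colon \mathcal{L}(E/Z) \to \mathcal{L}(Z^\perp)$ that realizes the equivalence. The natural candidate is the map induced by the standard identification $(E/Z)^* \cong Z^\perp$ coming from the chosen non-degenerate bilinear form: concretely, for $V \in \mathcal{L}(E/Z)$ write $V = U/Z$ for a unique $U \in \mathcal{L}(E)$ with $Z \le U$ (i.e.\ $U = \pi^{-1}(V)$), and set $\psi(V) = U^\perp$. Since $Z \le U$ we have $U^\perp \le Z^\perp$, so $\psi(V) \in \mathcal{L}(Z^\perp)$; it is a bijection onto $\mathcal{L}(Z^\perp)$ with inverse $W \mapsto W^\perp/Z$ (using $Z \le W^\perp$ whenever $W \le Z^\perp$), and it is order-\emph{reversing} on these sublattices. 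The key compatibility to check is that this order-reversal is exactly what makes it monotone once composed with the dualizations built into both sides: the dual of a monotone set flips inclusions, and so does restriction-after-contraction in the way $\psi$ encodes, so the two flips either compose to a monotone map or we instead take $\psi$ together with the orthogonal-complement bookkeeping to define an honest monotone bijection $f$. I would state $f$ as $f(V) = (\pi^{-1}(V))^\perp$ for $V \in \mathcal{L}(E/Z)$ and verify directly that $f$ and $f^{-1}$ are $\mathbb{F}_q$-linear and order-reversing, hence give an equivalence of access structures after both sides are themselves "reversed" by a $*$.

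Next, the membership chase. By definition $(\mathbf{S}/Z)^* = ((\mathcal{A}/Z)^*, (\Gamma/Z)^*)$, so its reconstructing structure is $(\mathcal{A}/Z)^* = \{V \in \mathcal{L}(E/Z) : V^{\perp E/Z} \in \mathcal{A}/Z\}$, i.e.\ $\{V : \pi^{-1}(V^{\perp E/Z}) \in \mathcal{A}\}$. On the other side, $\mathbf{S}^*|_{Z^\perp} = (\mathcal{A}^*|_{Z^\perp}, \Gamma^*|_{Z^\perp})$ has reconstructing structure $\{W \in \mathcal{L}(Z^\perp) : W^\perp \in \mathcal{A}\}$. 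So I must show $\pi^{-1}(V^{\perp E/Z}) = (f(V))^\perp$, where $f(V) = (\pi^{-1}(V))^\perp \in \mathcal{L}(Z^\perp)$; this is the single genuinely substantive identity and it is just the statement that orthogonal complement in the quotient $E/Z$ corresponds, under $\pi^{-1}$ and $(-)^\perp$, to orthogonal complement in $E$ relative to $Z^\perp$ — a standard fact about bilinear forms on quotients that I would prove by a dimension count together with the inclusion $\pi^{-1}(V^{\perp E/Z}) \supseteq (\pi^{-1}(V))^{\perp}$ (which holds since pairing in the quotient is inherited from pairing in $E$ restricted to $Z^\perp \times E$). The analogous identity for the privacy structures $\Gamma^*/Z$ versus $\Gamma^*|_{Z^\perp}$ is the same computation with $\Gamma$ in place of $\mathcal{A}$. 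Finally I would note that $f$ being $\mathbb{F}_q$-linear, bijective, and inclusion-compatible in the required sense (once the two $*$'s are accounted for) upgrades these set equalities to an equivalence $\simeq$, and then deduce the second identity by substituting $\mathbf{S} \leftarrow \mathbf{S}^*$, $Z \leftarrow Z$ and using $(\mathbf{S}^*)^* = \mathbf{S}$, $Z^{\perp\perp} = Z$.

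The main obstacle I anticipate is bookkeeping the direction of the order-isomorphism: both "dual" and the pair "$\pi^{-1}$ then $\perp$" reverse inclusions, and one must be careful that the composite used to define $f$ is genuinely monotone (order-\emph{preserving}) as the definition of equivalence demands, rather than accidentally order-reversing — i.e.\ one needs an even number of reversals. Concretely, the cleanest route is to check that $f$ as defined restricts to a bijection $\Gamma/Z \to$ (the reconstructing part of $\mathbf{S}^*|_{Z^\perp}$)'s preimage correctly, but I would organize the proof so that the only nontrivial analytic content is the one quotient-orthogonality lemma above, with everything else reduced to formal manipulation of the four operations $\overline{(\cdot)}, (\cdot)^*, (\cdot)|_Z, (\cdot)/Z$ on monotone sets.
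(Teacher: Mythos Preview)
Your overall strategy --- reduce to the first equivalence by duality, build an explicit $\mathbb{F}_q$-linear lattice isomorphism $\mathcal{L}(E/Z)\to\mathcal{L}(Z^\perp)$, and chase membership --- is exactly the paper's. The reduction of the second equivalence to the first via $\mathbf{S}\mapsto\mathbf{S}^*$ is also what the paper does.

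However, your concrete execution has a genuine error. With $f(V)=(\pi^{-1}(V))^\perp$ you compute $(f(V))^\perp=\pi^{-1}(V)$, so your ``single genuinely substantive identity'' $\pi^{-1}(V^{\perp E/Z})=(f(V))^\perp$ reads $\pi^{-1}(V^{\perp E/Z})=\pi^{-1}(V)$, which is false for generic $V$. Relatedly, this $f$ is order-\emph{reversing} (one $\perp$), so it cannot witness the equivalence, which by definition requires a monotone bijection. You flag the parity problem yourself but do not resolve it: the fix is to insert one more reversal and take
\[
\sigma(V)\;=\;\bigl(\pi^{-1}(V^{\perp E/Z})\bigr)^{\perp}\;\in\;\mathcal{L}(Z^\perp),
\]
which composes two order-reversals with the order-preserving $\pi^{-1}$ and is therefore monotone. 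With $\sigma$ in place of $f$ the membership check becomes tautological: $\sigma(V)^\perp=\pi^{-1}(V^{\perp E/Z})$, so $V\in(\mathcal{A}/Z)^*$ iff $\sigma(V)\in\mathcal{A}^*|_{Z^\perp}$, and likewise for $\Gamma$. The actual content then shifts to showing $\sigma$ is a bijection with monotone inverse; this is where your ``quotient-orthogonality'' lemma is needed, and the paper handles it by first choosing the bilinear forms so that $Z\oplus Z^\perp=E$ and there is a monotone isomorphism $\phi\colon E/Z\to Z^\perp$ with $\pi^{-1}(V)^\perp=\phi(V)^\perp\cap Z^\perp$, then writing down the inverse $\tau(W)=(\phi^{-1}((W+Z)^\perp))^{\perp E/Z}$ explicitly. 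Note that over $\mathbb{F}_q$ the induced pairing on $E/Z$ need not be non-degenerate for an arbitrary form (one can have $Z\cap Z^\perp\neq 0$), which is why the paper invokes the freedom to choose the form; your sketch silently assumes this compatibility.
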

\begin{proof} We prove the first equivalence as the second then follows by duality. Let $\pi\colon E \rightarrow E/Z$ denote the canonical quotient map. It is possible to choose bilinear forms such that $Z\oplus Z^{\perp} = E$ and there exists a monotone isomorphism $\phi\colon E/Z\rightarrow Z^{\perp}$ with a monotone inverse satisfying $\pi^{-1}(V)^{\perp} = \phi(V)^{\perp} \cap Z^{\perp}$ for all $V\in E/Z$, see \cite[Theorem 5.3]{DBLP:journals/corr/abs-2104-06570}. In particular,
\begin{align*}
    (\Gamma/Z)^* &= \{V\in \mathcal{L}(E/Z)\,|\, \pi^{-1}(V^{\perp E/Z}) \in \Gamma \},\\
    \Gamma^*|_{Z^{\perp}} &= \{ W\in \mathcal{L}(Z^{\perp}) \, |\, W^{\perp}\in \Gamma\},
\end{align*}
and similarly for $(\mathcal{A}/Z)^*$ and $\mathcal{A}^*|_{Z^{\perp}}$. Consider now the maps $\sigma\colon \mathcal{L}(E/Z) \rightarrow \mathcal{L}(Z^{\perp})$ such that $V\mapsto \pi^{-1}(V^{\perp E/Z})^{\perp}$
and $\tau \colon \mathcal{L}(Z^{\perp})\rightarrow \mathcal{L}(E/Z)$ such that $W\mapsto (\phi^{-1}((W+Z)^{\perp}))^{\perp E/Z}$. One may then confirm that these are well-defined, monotone, $\mathbb{F}_q$-linear, and mutually inverse. The proof then follows as $\sigma((\Gamma/Z)^*) = \Gamma^*|_{Z^{\perp}}$ and $\sigma((\mathcal{A}/Z)^*) = \mathcal{A}^*|_{Z^{\perp}}$.
\end{proof}

\subsection{$q$-Polymatroid Ports}
The following is the generalization of matroid ports to that of $q$-polymatroids, see \cite{farre_padro2007}.

\begin{definition}
Let $\mathcal{M}=(E,\rho)$ be a $q$-polymatroid and $P_0,P\in \mathcal{L}(E)$ with $P_0 \oplus P = E$. If $\rho(P_0) > 0$ then $\mathbf{S}_{P_0,P}(\mathcal{M}) = (\Gamma, \mathcal{A})$, where
\begin{align*}
    \Gamma &\coloneq \{V\in \mathcal{L}(P) \, |\, \rho(P_0 \,|\, V ) = 0\}, \\
    \mathcal{A} &\coloneqq \{W\in \mathcal{L}(P) \,|\, \rho(P_0\,|\, W ) = \rho_{\mathcal{C}}(P_0)\},
\end{align*}
is a generalized $q$-polymatroid port.
\end{definition}
Proposition \ref{prop:downsetupset} shows that a generalized $q$-polymatroid port $\mathbf{S}_{P_0,P}(\mathcal{M})$ is an access structure, where $\Gamma$ is the reconstructing structure and $\mathcal{A}$ is the privacy structure of the port.  Furthermore, if $ \mathcal{M}$ is a $q$-matroid, then $\mathbf{S}_{P_0,P}(\mathcal{M})$ is a generalized $q$-matroid port. If $\dim P_0 = 1$, it is a $q$-polymatroid port. If both conditions are satisfied, it is a $q$-matroid port. 

\begin{proposition}\label{prop:downsetupset} Let $\mathbf{S}_{P_0,P}(\mathcal{M})$ be a generalized $q$-polymatroid port. Then $\mathbf{S}_{P_0,P}(\mathcal{M})$ is an access structure on $\mathcal{L}(P)$. Additionally, $\mathbf{S}_{P_0,P}(\mathcal{M})$ is degenerate if and only if $\Gamma = \varnothing$.
\end{proposition}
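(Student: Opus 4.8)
The plan is to verify the three defining axioms of an access structure on $\mathcal{L}(P)$ for the pair $(\Gamma,\mathcal{A})$, and then separately characterize degeneracy. First I would establish monotonicity of $\Gamma$: if $V\in\Gamma$ and $V\leq V'\leq P$, I need $\rho(P_0\mid V')=0$, i.e. $\rho(P_0+V')=\rho(V')$. Using submodularity (R3) applied to $P_0+V$ and $V'$, together with $(P_0+V)+V'=P_0+V'$ and $(P_0+V)\cap V'\geq V$, and monotonicity (R2), one gets $\rho(P_0+V')+\rho(V)\leq\rho(P_0+V')+\rho((P_0+V)\cap V')\leq\rho(P_0+V)+\rho(V')=\rho(V)+\rho(V')$, hence $\rho(P_0+V')\leq\rho(V')$; the reverse inequality is (R2). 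So $\rho(P_0\mid V')=0$ and $\Gamma$ is monotone. Dually, for anti-monotonicity of $\mathcal{A}$: if $W\in\mathcal{A}$ and $W'\leq W$, I want $\rho(P_0\mid W')=\rho(P_0)$, equivalently $\rho(P_0+W')\geq\rho(P_0)+\rho(W')$; since $\rho(P_0\mid\cdot)$ is the conditional rank, the standard fact that $\rho(P_0\mid\cdot)$ is itself monotone nonincreasing in the conditioning subspace (again a one-line consequence of submodularity, mirroring the argument above) gives $\rho(P_0\mid W')\geq\rho(P_0\mid W)=\rho(P_0)$, while boundedness-type reasoning ($\rho(P_0\mid W')\leq\rho(P_0\mid 0)=\rho(P_0)$) forces equality. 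So $\mathcal{A}$ is anti-monotone.

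Next I would show $\Gamma\cap\mathcal{A}=\varnothing$. If some $V$ lay in both, then $\rho(P_0\mid V)=0$ and $\rho(P_0\mid V)=\rho(P_0)$, forcing $\rho(P_0)=0$; but the definition of a generalized $q$-polymatroid port assumes $\rho(P_0)>0$, a contradiction. This completes the verification that $\mathbf{S}_{P_0,P}(\mathcal{M})$ is an access structure on $\mathcal{L}(P)$.

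For the degeneracy claim, one direction is immediate: if $\Gamma=\varnothing$ then $\mathbf{S}_{P_0,P}(\mathcal{M})$ is degenerate by definition. For the converse, I must show $\mathcal{A}\neq\varnothing$ always — specifically that the zero subspace $\{0\}$ lies in $\mathcal{A}$, since $\rho(P_0\mid\{0\})=\rho(P_0+\{0\})-\rho(\{0\})=\rho(P_0)-0=\rho(P_0)$, using $\rho(\{0\})=0$ (which follows from (R1)). Hence $\mathcal{A}$ is never empty, so degeneracy of $\mathbf{S}_{P_0,P}(\mathcal{M})$ can only come from $\Gamma=\varnothing$.

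The main obstacle is getting the two submodularity arguments clean — in particular being careful that $(P_0+V)\cap V'$ genuinely contains $V$ (true since $V\leq P_0+V$ and $V\leq V'$) so that (R2) can be applied before (R3), and symmetrically formulating the monotonicity of $W\mapsto\rho(P_0\mid W)$. These are routine but must be stated in the right order; everything else (the intersection and the $\{0\}$ computation) is essentially bookkeeping with the axioms. Note also a cosmetic point: the definition writes $\rho_{\mathcal{C}}$ where it should read $\rho$, since $\mathcal{M}=(E,\rho)$ is an abstract $q$-polymatroid; I would phrase the proof in terms of $\rho$.
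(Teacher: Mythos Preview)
Your proof is correct and follows essentially the same route as the paper: the monotonicity of $\Gamma$ is obtained via submodularity applied to $P_0+V$ and $V'$ (the paper phrases this as a contradiction but the underlying inequality is identical), and the remaining points---anti-monotonicity of $\mathcal{A}$, disjointness, and the degeneracy characterization via $\{0\}\in\mathcal{A}$---are handled just as you do, though the paper is terser (it only says anti-monotonicity ``follows by duality'' and leaves the rest implicit). Your version is in fact more complete than the paper's own proof.
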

\begin{proof}
We prove the monotonicity of $\Gamma$, as the anti-monotonicity of $\mathcal{A}$ then follows by duality. Let $V \in \Gamma$ and suppose $V' \in \mathcal{L}(P)$ such that $V \leq V'$ but $V' \notin \Gamma$. Then,
\[
    \rho(V') - \rho(V+P_0) < \rho(V'+P_0) - \rho(V+P_0 ) \leq \rho(V') - \rho(V)
\]
which contradicts the assumption that $V \in \Gamma$.
\end{proof}

\begin{example} \label{exmp2} Consider the code $\mathcal{C}$ of Example \ref{exmp}. Then $\mathbf{S}_{P_0,P}(\mathcal{M}_\mathcal{C})$ is a $q$-polymatroid port, and the access structure is characterized by $\mathcal{A}=\{0\}$, $\Gamma_{\min} = \{\langle \mathbf{e}_2, \mathbf{e}_3 \rangle_{\mathbb{F}_2}
,\langle \mathbf{e}_2+\mathbf{e}_4,\mathbf{e}_3+\mathbf{e}_4 \rangle_{\mathbb{F}_2}
, \langle \mathbf{e}_2+\mathbf{e}_4,\mathbf{e}_3 \rangle_{\mathbb{F}_2},\langle\mathbf{e}_2, \mathbf{e}_3+\mathbf{e}_4 \rangle_{\mathbb{F}_2}
, \langle \mathbf{e}_4 \rangle_{\mathbb{F}_2}
    \}$, and $\overline{\Gamma}\setminus \mathcal{A} = \{\langle \mathbf{e}_3 \rangle_{\mathbb{F}_2}, \langle \mathbf{e}_2+\mathbf{e}_4 \rangle_{\mathbb{F}_2},
\langle \mathbf{e}_3+\mathbf{e}_4 \rangle_{\mathbb{F}_2}, 
\langle \mathbf{e}_2 \rangle_{\mathbb{F}_2}
, \langle \mathbf{e}_2+\mathbf{e}_3 \rangle_{\mathbb{F}_2},
\langle \mathbf{e}_2+\mathbf{e}_3+\mathbf{e}_4 \rangle_{\mathbb{F}_2}
 \}.$
\end{example}

We now generalize the notion of information ratio to generalized $q$-polymatroid ports and demonstrate how it relates to the minimum gap, entirely analogous to the case of classical polymatroid ports \cite{cryptoeprint:2012/595}. As such, the information ratio is a measure of the size of the shares compared to the size of the secret.

\begin{definition}
    Let $\mathbf{S} = \mathbf{S}_{P_0,P}(\mathcal{M})$ be a generalized $q$-polymatroid port and $\mathcal{P}(P)$ denote the $1$-dimensional spaces of $\mathcal{L}(P)$. Then $\sigma(\mathbf{S}) = \max_{p \in \mathcal{P}(P)}\{ \rho(p)\}/\rho(P_0)$ is the information ratio of $\mathbf{S}$.
\end{definition}

\begin{proposition}
Let $\mathbf{S} = \mathbf{S}_{P_0,P}(\mathcal{M})$ be a generalized $q$-polymatroid port. Then $\sigma(\mathbf{S})\geq g(\mathbf{S})^{-1}$.
\end{proposition}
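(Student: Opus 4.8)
The plan is to exhibit, for any player $p \in \mathcal{P}(P)$ realizing the maximum $\rho(p)$, a pair $(V, W) \in \Gamma \times \mathcal{A}$ with $W \leq V$ and $\dim(V/W) \leq \rho(P_0)/\rho(p)$; this immediately gives $g(\mathbf{S}) \leq \rho(P_0)/\max_p \rho(p) = \sigma(\mathbf{S})^{-1}$, which is the claim. The natural strategy is a greedy chain argument inside $\mathcal{L}(P)$: start from $W = 0$ (note $0 \in \mathcal{A}$ since $\rho(P_0 \mid 0) = \rho(P_0)$, so $\mathcal{A} \neq \varnothing$; if $\Gamma = \varnothing$ the structure is degenerate and the minimum gap is an empty minimum, so we may assume $\mathbf{S}$ is nondegenerate and $P \in \Gamma$), and build an increasing chain of subspaces, adding one $1$-dimensional piece at a time, until the conditional rank $\rho(P_0 \mid \cdot)$ drops from $\rho(P_0)$ to $0$.

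Concretely, I would pick a basis $p_1, \dots, p_d$ of $P$ with $p = \langle p_1 \rangle$ ordered so that each added generator decreases $\rho(P_0 \mid \cdot)$ by as much as possible, and set $W_j = \langle p_1, \dots, p_j \rangle$. The key quantitative input is that each single generator can reduce the conditional rank by at most $\rho(p_i) \leq \max_p \rho(p)$: indeed, by submodularity and monotonicity,
\[
\rho(P_0 \mid W_{j-1}) - \rho(P_0 \mid W_j) = \rho(P_0 + W_{j-1}) - \rho(P_0 + W_j) + \rho(W_j) - \rho(W_{j-1}) \leq \rho(W_j) - \rho(W_{j-1}) \leq \rho(p_j),
\]
where the last inequality is again submodularity applied to $W_{j-1}$ and $p_j$ together with $\rho(W_{j-1} \cap p_j) \geq 0$. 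Since $\rho(P_0 \mid P) = 0$ (as $P \in \Gamma$) and $\rho(P_0 \mid 0) = \rho(P_0)$, and the conditional rank is nonincreasing along the chain with steps bounded by $\max_p \rho(p)$, there is some index $k$ with $k \cdot \max_p \rho(p) \geq \rho(P_0)$, i.e. the chain must take at least $\rho(P_0)/\max_p\rho(p)$ steps to exhaust $\rho(P_0)$ units of conditional rank. Taking $V = W_k$ (the first chain element with $\rho(P_0 \mid W_k) = 0$, so $V \in \Gamma$) and $W = W_{k'}$ the last element with $\rho(P_0 \mid W_{k'}) = \rho(P_0)$ (so $W \in \mathcal{A}$), we get $W \leq V$ and $\dim(V/W) = k - k' \geq \rho(P_0)/\max_p \rho(p)$ — wait, that is the wrong direction.

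I need to be more careful: I want an $\emph{upper}$ bound on $\dim(V/W)$ for $\emph{some}$ valid pair, not a lower bound on the chain length. The correct reading: along $\emph{any}$ chain from $0$ (in $\mathcal{A}$) to $P$ (in $\Gamma$), the conditional rank drops from $\rho(P_0)$ to $0$, and since each unit step drops it by at most $\max_p \rho(p)$, reaching $0$ requires the chain to be long, which is exactly what lets me $\emph{lower-bound}$ $\dim(V/W)$ and hence $\emph{lower-bound}$ $g(\mathbf{S})^{-1}$ — still backwards. The resolution is to run the argument contrapositively on the gap: take any $(V,W) \in \Gamma \times \mathcal{A}$ with $W \leq V$ attaining the minimum gap $g = g(\mathbf{S})$; then refine a chain from $W$ to $V$ of length $g$, along which $\rho(P_0 \mid \cdot)$ goes from $\rho(P_0)$ down to $0$ in $g$ steps each of size at most $\max_p \rho(p)$, giving $\rho(P_0) \leq g \cdot \max_p \rho(p)$, hence $g^{-1} \leq \max_p \rho(p)/\rho(P_0) = \sigma(\mathbf{S})$, i.e. $\sigma(\mathbf{S}) \geq g(\mathbf{S})^{-1}$.

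The main obstacle is purely bookkeeping: verifying that $\rho(P_0 \mid W) = \rho(P_0)$ for $W \in \mathcal{A}$ and $\rho(P_0 \mid V) = 0$ for $V \in \Gamma$ (immediate from the definitions of $\Gamma, \mathcal{A}$), handling the degenerate case $\Gamma = \varnothing$ separately (where $g(\mathbf{S})$ is a minimum over the empty set and the inequality is vacuous or interpreted as $g(\mathbf{S}) = \infty$), and confirming the per-step bound $\rho(W_j) - \rho(W_{j-1}) \leq \rho(p_j) \leq \max_{p} \rho(p)$ via submodularity and (R1). None of these steps is deep; the content is the telescoping of conditional ranks along a maximal chain refining the minimizing pair $(V,W)$.
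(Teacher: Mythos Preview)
Your final argument (from ``The resolution is to run the argument contrapositively on the gap'' onward) is correct and is essentially the paper's proof. The paper phrases it slightly more compactly: given $(V,W)\in\Gamma\times\mathcal{A}$ with $W\leq V$, it picks a complement $Y$ with $W\oplus Y=V$, observes $\rho(P_0)\leq\rho(Y)\leq \dim Y\cdot\max_{p}\rho(p)$, and then minimizes over such pairs; your telescoping along a maximal chain from $W$ to $V$ is the same computation unpacked step by step.
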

\begin{proof}
Let $V\in \Gamma$ and $W\in \mathcal{A}$ with $W\leq V$ and fix some $Y\in \mathcal{L}(P)$ such that $W\oplus Y = V$. Then $\rho(P_0)\leq \rho(Y) \leq \dim Y \max_{p\in \mathcal{P}(P)}\{\rho(p)\}$ and the result follows by taking the minimum over all such pairs $(V,W)$.
\end{proof}

Proposition \ref{prop:minors} shows how the minors of an access structure from a generalized $q$-polymatroid port relates to the ports of the minors of the underlying $q$-polymatroid. 

\begin{proposition}\label{prop:minors}
    Let $\mathbf{S}_{P_0,P}(\mathcal{M})$ be a generalized $q$-polymatroid port and $Z\in \mathcal{L}(P)$. Then \linebreak $\mathbf{S}_{P_0,Z}(\mathcal{M}|_{P_0 + Z})$ is a generalized $q$-polymatroid port and
    $$\mathbf{S}_{P_0,P}(\mathcal{M})|_Z=\mathbf{S}_{P_0,Z}(\mathcal{M}|_{P_0 + Z}).$$
    Furthermore, if $Z\in \overline{\Gamma}$ then $\mathbf{S}_{\pi(P_0),\pi(P)}(\mathcal{M}/Z)$ is a generalized $q$-polymatroid port and
    $$\mathbf{S}_{P_0,P}(\mathcal{M})/Z = \mathbf{S}_{\pi(P_0),\pi(P)}(\mathcal{M}/Z),$$
    where $\pi\colon E\rightarrow E/Z$ is the canonical quotient map.
\end{proposition}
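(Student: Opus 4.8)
The plan is to prove the two identities separately; in each case I would first check that the port on the right-hand side is well defined, and then unfold the definitions of that port and of the relevant minor of $\mathbf{S}_{P_0,P}(\mathcal{M})$, reducing the comparison of reconstructing and privacy structures to identities between conditional ranks obtained from the formulas $\rho|_{P_0+Z}(V)=\rho(V)$ and $\rho_{E/Z}(V)=\rho(\pi^{-1}(V))-\rho(Z)$.

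For the restriction identity, since $Z\le P$ we have $P_0\cap Z\le P_0\cap P=0$, so $P_0\oplus Z=P_0+Z$, and $\rho|_{P_0+Z}(P_0)=\rho(P_0)>0$, so $\mathbf{S}_{P_0,Z}(\mathcal{M}|_{P_0+Z})$ is a generalized $q$-polymatroid port. For $V\in\mathcal{L}(Z)$ the spaces $V$ and $P_0+V$ both lie inside $P_0+Z$, so $\rho|_{P_0+Z}(P_0\mid V)=\rho(P_0\mid V)$, and likewise $\rho|_{P_0+Z}(P_0)=\rho(P_0)$. Hence the reconstructing structure of $\mathbf{S}_{P_0,Z}(\mathcal{M}|_{P_0+Z})$ equals $\{V\in\mathcal{L}(Z):\rho(P_0\mid V)=0\}=\Gamma\cap\mathcal{L}(Z)=\Gamma|_Z$, and its privacy structure is $\mathcal{A}|_Z$; this gives $\mathbf{S}_{P_0,P}(\mathcal{M})|_Z=\mathbf{S}_{P_0,Z}(\mathcal{M}|_{P_0+Z})$.

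For the contraction identity, assume $Z\in\overline{\Gamma}$. From $Z\le P$ and $P_0\cap P=0$ one gets $\dim\pi(P_0)+\dim\pi(P)=\dim P_0+(\dim P-\dim Z)=\dim(E/Z)$ and $\pi(P_0)\cap\pi(P)=\bigl((P_0+Z)\cap P\bigr)/Z=0$, so $\pi(P_0)\oplus\pi(P)=E/Z$; moreover $\rho_{E/Z}(\pi(P_0))=\rho(P_0\mid Z)$, which is positive exactly because $Z\notin\Gamma$, so $\mathbf{S}_{\pi(P_0),\pi(P)}(\mathcal{M}/Z)$ is a generalized $q$-polymatroid port. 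The key computation is that for $V\in\mathcal{L}(\pi(P))$ we have $\pi^{-1}(\pi(P_0)+V)=P_0+\pi^{-1}(V)$, since $\ker\pi=Z\le\pi^{-1}(V)$, and therefore
\[
\rho_{E/Z}(\pi(P_0)\mid V)=\rho(P_0+\pi^{-1}(V))-\rho(\pi^{-1}(V))=\rho(P_0\mid\pi^{-1}(V)).
\]
Together with the definition $H/Z=\{V:\pi^{-1}(V)\in H\}$, this identifies the reconstructing structure of $\mathbf{S}_{\pi(P_0),\pi(P)}(\mathcal{M}/Z)$ with $\Gamma/Z$.

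The step I expect to be the main obstacle is matching the privacy structures. The port condition reads $\rho_{E/Z}(\pi(P_0)\mid V)=\rho_{E/Z}(\pi(P_0))$, i.e.\ $\rho(P_0\mid\pi^{-1}(V))=\rho(P_0\mid Z)$, whereas $\pi^{-1}(V)\in\mathcal{A}$ means $\rho(P_0\mid\pi^{-1}(V))=\rho(P_0)$. A one-line submodularity argument shows that $W\mapsto\rho(P_0\mid W)$ is order-reversing, so $Z\le\pi^{-1}(V)$ gives $\rho(P_0\mid\pi^{-1}(V))\le\rho(P_0\mid Z)\le\rho(P_0)$, and the two conditions coincide precisely when $\rho(P_0\mid Z)=\rho(P_0)$, that is, when $Z\in\mathcal{A}$. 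I would therefore carry out the comparison under the hypothesis $Z\in\mathcal{A}$ (equivalently $\rho_{E/Z}(\pi(P_0))=\rho(P_0)$), where both privacy structures equal $\{V:\pi^{-1}(V)\in\mathcal{A}\}=\mathcal{A}/Z$ and the contraction identity follows; for $q$-matroid ports this is automatic, since such ports are perfect and hence $\overline{\Gamma}=\mathcal{A}$. Determining whether the bare hypothesis $Z\in\overline{\Gamma}$ suffices in the general $q$-polymatroid case — or whether the contracted privacy structure should instead be read as $\{V:\rho_{E/Z}(\pi(P_0)\mid V)=\rho_{E/Z}(\pi(P_0))\}$ — is the point I would need to settle carefully.
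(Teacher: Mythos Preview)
Your approach is exactly the paper's: its proof records only the identities $\rho|_{P_0+Z}(V)=\rho(V)$ for the restriction and $\rho_{E/Z}(\pi(P_0))=\rho(P_0\mid Z)$, $\rho_{E/Z}(\pi(P_0)\mid W)=\rho(P_0\mid\pi^{-1}(W))$ for the contraction, leaving the unpacking you carry out implicit. Your treatment of the restriction part and of the reconstructing structure under contraction matches this and is correct.

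The concern you raise about the privacy structures is real and is not addressed by the paper's proof. As you compute, the privacy structure of $\mathbf{S}_{\pi(P_0),\pi(P)}(\mathcal{M}/Z)$ is $\{V:\rho(P_0\mid\pi^{-1}(V))=\rho(P_0\mid Z)\}$, whereas $\mathcal{A}/Z=\{V:\rho(P_0\mid\pi^{-1}(V))=\rho(P_0)\}$; if $Z\in\overline{\Gamma}\setminus\mathcal{A}$ the former always contains $0$ while the latter is empty (since $\rho(P_0\mid\cdot)$ is order-reversing and $Z\le\pi^{-1}(V)$), so the two sets differ. The paper's two identities do not close this gap. Your instinct to argue under the stronger hypothesis $Z\in\mathcal{A}$ --- equivalently $\rho_{E/Z}(\pi(P_0))=\rho(P_0)$ --- is the correct one, and it is automatically satisfied for perfect ports (in particular $q$-matroid ports, where $\overline{\Gamma}=\mathcal{A}$).
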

\begin{proof}
The first statement follows directly as $\rho|_{P_0+Z}(V) = \rho(V)$ for all $V\in \mathcal{L}(P_0+Z)$. The second statement holds as $\rho_{E/Z}(\pi(P_0)) = \rho(P_0 \,|\, Z)$ and $\rho_{E/Z}(\pi(P_0)\,|\, W) = \rho(P_0\,|\, \pi^{-1}(W))$ for all $W\in \mathcal{L}(P/Z)$.
\end{proof}

Proposition \ref{prop:equivports} shows how equivalent $q$-polymatroids yield equivalent ports under appropriate choices of $P_0$ and $P$ in the following sense. The proof follows directly by Definition \ref{def:equivalentmatroids}.

\begin{proposition} \label{prop:equivports}
    Let $\mathbf{S}_{P_0,P}(\mathcal{M})$ be a generalized $q$-polymatroid port. If $\mathcal{M}\simeq \mathcal{M'}$ under $\phi$ then $\mathbf{S}_{\phi(P_0),\phi(P)}(\mathcal{M}')$ is a generalized $q$-polymatroid port and $$\mathbf{S}_{P_0,P}(\mathcal{M}) \simeq \mathbf{S}_{\phi(P_0),\phi(P)}(\mathcal{M}').$$
\end{proposition}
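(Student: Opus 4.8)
The plan is to push the $q$-polymatroid equivalence $\phi$ through to an equivalence of the associated access structures, verifying along the way that $\phi$ respects every ingredient in the definition of a port. Write $\mathcal{M} = (E,\rho)$ and $\mathcal{M}' = (E',\rho')$, set $(\Gamma,\mathcal{A}) = \mathbf{S}_{P_0,P}(\mathcal{M})$, and recall from Definition \ref{def:equivalentmatroids} that $\phi\colon \mathcal{L}(E) \to \mathcal{L}(E')$ is an $\mathbb{F}_q$-linear lattice isomorphism with $\rho' \circ \phi = \rho$.

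First I would check that $\mathbf{S}_{\phi(P_0),\phi(P)}(\mathcal{M}')$ is a well-defined generalized $q$-polymatroid port. Since $\phi$ is a lattice isomorphism it commutes with meets and joins, so $\phi(P_0) \cap \phi(P) = \phi(P_0 \cap P) = \phi(0) = 0$ and $\phi(P_0) + \phi(P) = \phi(P_0 + P) = \phi(E) = E'$; hence $\phi(P_0) \oplus \phi(P) = E'$. Moreover $\rho'(\phi(P_0)) = \rho(P_0) > 0$. So the port is defined; denote it $(\Gamma',\mathcal{A}')$. Next I would take $f \coloneqq \phi|_{\mathcal{L}(P)}$ as the candidate equivalence map: because $\phi$ restricts to a lattice isomorphism between the principal ideal below $P$ and that below $\phi(P)$, the map $f$ is an $\mathbb{F}_q$-linear monotone bijection $\mathcal{L}(P) \to \mathcal{L}(\phi(P))$ whose inverse (the restriction of $\phi^{-1}$) is again monotone, exactly the kind of map the definition of equivalence of access structures requires.

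It then remains to show $f(\Gamma) = \Gamma'$ and $f(\mathcal{A}) = \mathcal{A}'$. The key identity is that for every $V \in \mathcal{L}(P)$,
\[
\rho'\big(\phi(P_0) \,\big|\, \phi(V)\big) = \rho'(\phi(P_0 + V)) - \rho'(\phi(V)) = \rho(P_0 + V) - \rho(V) = \rho(P_0 \,|\, V),
\]
using that $\phi$ preserves joins and that $\rho' \circ \phi = \rho$. Combining this with $\rho'(\phi(P_0)) = \rho(P_0)$ gives $V \in \Gamma \iff \rho(P_0 \,|\, V) = 0 \iff \phi(V) \in \Gamma'$ and $V \in \mathcal{A} \iff \rho(P_0 \,|\, V) = \rho(P_0) \iff \phi(V) \in \mathcal{A}'$, that is, $f(\Gamma) = \Gamma'$ and $f(\mathcal{A}) = \mathcal{A}'$. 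By the definition of equivalence of access structures this yields $\mathbf{S}_{P_0,P}(\mathcal{M}) \simeq \mathbf{S}_{\phi(P_0),\phi(P)}(\mathcal{M}')$.

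There is no genuine obstacle here; the proof is essentially a matter of bookkeeping. The only points worth stating carefully are that an $\mathbb{F}_q$-linear lattice isomorphism commutes with $+$ and $\cap$ (so that conditional ranks transport across $\phi$), and that the restriction of $\phi$ to $\mathcal{L}(P)$ again has image a full subspace lattice $\mathcal{L}(\phi(P))$, so that it can legitimately serve as the monotone bijection between the two player lattices demanded by the notion of equivalence.
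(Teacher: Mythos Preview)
Your proof is correct and is exactly the direct unpacking of Definition~\ref{def:equivalentmatroids} that the paper has in mind; the paper simply states that the result ``follows directly by Definition~\ref{def:equivalentmatroids}'' without writing out the details you have supplied. There is nothing to add or change.
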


Dualizing the port or the underlying $q$-polymatroid yields equivalent ports in the following sense.

\begin{proposition}
    Let $\mathbf{S}_{P_0,P}(\mathcal{M})$ be a non-degenerate generalized $q$-polymatroid port with $\rho(P_0) = \dim P_0$. Then
    \[
        \mathbf{S}_{P_0,P}(\mathcal{M})^* \simeq \mathbf{S}_{P^{\perp},P_0\textcolor{white}{)}\hspace{-1.5mm}^{\perp}}(\mathcal{M}^*).
    \]
\end{proposition}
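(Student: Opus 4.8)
The plan is to exploit the freedom in the bilinear form to reduce to the case where $P_0$ and $P$ are mutually orthogonal, and then verify a plain equality of access structures by a short conditional-rank computation. Since the equivalence class of $\mathbf{S}^*$ does not depend on the chosen non-degenerate symmetric bilinear form (as remarked after its definition), and since the equivalence class of $\mathbf{S}_{P^{\perp},P_0^{\perp}}(\mathcal{M}^*)$ likewise does not depend on the form on $E$ — changing it replaces $\mathcal{M}^*$ by an equivalent $q$-polymatroid via an $\mathbb{F}_q$-linear map that simultaneously sends $P^{\perp}$ and $P_0^{\perp}$ to their analogues, whence Proposition \ref{prop:equivports} applies — I may fix non-degenerate symmetric forms $b_0$ on $P_0$ and $b_1$ on $P$, take $\perp=\perp E$ to be induced by the form on $E=P_0\oplus P$ that restricts to $b_0$ and $b_1$ with $P_0$ orthogonal to $P$, and take $\perp P$ to be induced by $b_1$. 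Orthogonality then gives $P\le P_0^{\perp}$ and $P_0\le P^{\perp}$, and since $\dim P_0^{\perp}=\dim P$ and $\dim P^{\perp}=\dim P_0$ these are equalities; moreover for $V\le P$ one has $V^{\perp P}=V^{\perp}\cap P$ and $V^{\perp}=P_0\oplus V^{\perp P}$. In particular $\mathbf{S}_{P^{\perp},P_0^{\perp}}(\mathcal{M}^*)=\mathbf{S}_{P_0,P}(\mathcal{M}^*)$, an access structure on $\mathcal{L}(P)$ just like $\mathbf{S}^*$, so it suffices to prove these two are equal.

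Write $\mathbf{S}_{P_0,P}(\mathcal{M})^*=(\mathcal{A}^*,\Gamma^*)$ and $\mathbf{S}_{P_0,P}(\mathcal{M}^*)=(\Gamma',\mathcal{A}')$. For $V\le P$ the identities $(P_0+V)^{\perp}=P_0^{\perp}\cap V^{\perp}=P\cap(P_0\oplus V^{\perp P})=V^{\perp P}$ and $V^{\perp}=P_0\oplus V^{\perp P}$, together with $\rho(P_0)=\dim P_0$, give
$$
\rho^*(P_0\mid V)=\rho^*(P_0+V)-\rho^*(V)=\dim P_0-\rho(P_0\mid V^{\perp P})=\rho(P_0)-\rho(P_0\mid V^{\perp P}).
$$
Hence $V\in\Gamma'$, i.e. $\rho^*(P_0\mid V)=0$, holds exactly when $\rho(P_0\mid V^{\perp P})=\rho(P_0)$, i.e. when $V^{\perp P}\in\mathcal{A}$, i.e. when $V\in\mathcal{A}^*$; so $\Gamma'=\mathcal{A}^*$.

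For the privacy structures, $\rho^*(P_0)=\dim P_0-\rho(E)+\rho(P_0^{\perp})=\rho(P_0)-\rho(P_0\mid P)$. Non-degeneracy of $\mathbf{S}_{P_0,P}(\mathcal{M})$ gives $\Gamma\neq\varnothing$ by Proposition \ref{prop:downsetupset}, so $\rho(P_0\mid V_0)=0$ for some $V_0\le P$; as $W\mapsto\rho(P_0\mid W)$ is order-reversing on $\mathcal{L}(E)$ (by submodularity, exactly as in the proof of Proposition \ref{prop:downsetupset}), we get $\rho(P_0\mid P)=0$ and thus $\rho^*(P_0)=\rho(P_0)$. Then $W\in\mathcal{A}'$, i.e. $\rho^*(P_0\mid W)=\rho^*(P_0)$, holds exactly when $\rho(P_0\mid W^{\perp P})=0$, i.e. when $W^{\perp P}\in\Gamma$, i.e. when $W\in\Gamma^*$; so $\mathcal{A}'=\Gamma^*$. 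Combining, $\mathbf{S}_{P_0,P}(\mathcal{M})^*=(\mathcal{A}^*,\Gamma^*)=(\Gamma',\mathcal{A}')=\mathbf{S}_{P^{\perp},P_0^{\perp}}(\mathcal{M}^*)$; and $\rho^*(P^{\perp})=\rho^*(P_0)=\rho(P_0)>0$, so the right-hand side is a bona fide generalized $q$-polymatroid port.

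The main obstacle I anticipate is the first step: making the ``without loss of generality $P_0\perp P$'' rigorous, which rests on the form-independence of both equivalence classes, and hence on a short verification that changing the form on $E$ alters $\mathcal{M}^*$, $P^{\perp}$ and $P_0^{\perp}$ compatibly so that Proposition \ref{prop:equivports} applies. The rank identities themselves are routine once the orthogonal splitting is in place; if one prefers to skip the reduction, they can instead be carried out for a general form using, as the equivalence map, the $\mathbb{F}_q$-linear monotone bijection $\mathcal{L}(P)\to\mathcal{L}(P_0^{\perp})$ sending $V$ to the orthogonal complement of $V^{\perp}\cap P_0^{\perp}$ taken inside $P_0^{\perp}$, at the cost of heavier bookkeeping.
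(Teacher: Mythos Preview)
Your proof is correct and rests on the same core identity as the paper's, namely that for $V\le P$ one has $\rho^*(P^\perp\mid V^\perp\cap P_0^\perp)=\dim P^\perp-\rho(P_0\mid V)$; in your orthogonal-splitting reduction this becomes $\rho^*(P_0\mid V)=\rho(P_0)-\rho(P_0\mid V^{\perp P})$. The difference is structural: the paper does not reduce to $P_0\perp P$ but works directly with an arbitrary form on $E$, exhibiting the anti-monotone bijection $\tau\colon\mathcal{L}(P)\to\mathcal{L}(P_0^\perp)$, $V\mapsto V^\perp\cap P_0^\perp$, and reading off the equivalence via $V\mapsto\tau(V^{\perp P})$ --- essentially the alternative you sketch in your last paragraph. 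Your reduction buys a literal equality $\mathbf{S}_{P_0,P}(\mathcal{M})^*=\mathbf{S}_{P_0,P}(\mathcal{M}^*)$ and trivializes the equivalence map, at the price of the form-independence argument; the paper's direct route is terser but leaves implicit the verifications that $\rho^*(P^\perp)=\dim P^\perp$ (this is where non-degeneracy and $\rho(P_0)=\dim P_0$ enter), which you spell out explicitly.
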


\begin{proof}
    Let $V\in \mathcal{L}(P)$ and $\bot = \bot E$. Then
    $$
        \rho^*(P^\perp\,|\, V^\perp \cap P_0^\perp) = \dim P^\perp - \rho(P_0\,|\, V).
    $$ The map $\tau \colon \mathcal{L}(P)\rightarrow \mathcal{L}(P_0^\perp)$ such that $V\mapsto V^\perp\cap P_0^\perp$ is then an anti-monotone bijection and the result follows as
    \[
    V\in \Gamma^* \Leftrightarrow V^{\perp P}\in \Gamma \Leftrightarrow \rho^*(P^\perp\,|\, \tau(V^{\perp P})) = \rho^*(P^\perp)
    \]
    and
    \[
    W\in \mathcal{A}^* \Leftrightarrow W^{\perp P}\in \mathcal{A} \Leftrightarrow \rho^*(P^\perp\,|\, \tau(W^{\perp P})) =0.
   \qedhere \]
    \end{proof}

An access structure realized by a $q$-matroid port is perfect, and its minimal access structure can be characterised by the minors of the underlying $q$-matroid.
\begin{theorem}\label{thm:qmatroidports} Let $\mathbf{S}_{P_0,P}(\mathcal{M})$ be a $q$-matroid port. Then $\mathbf{S}_{P_0,P}(\mathcal{M})$ is perfect and \begin{align*}
    \Gamma_{\min} = \{V\in \mathcal{L}(P) \,|\, &\textcolor{black}{V \text{ is a basis in } \mathcal{M}|_{V+P_0} \;} \textcolor{black}{   and }\; W+P_0 \\ &\textcolor{black}{ \text{ is independent for all } W<V}\}.
    \end{align*}
\end{theorem}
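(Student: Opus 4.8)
The plan is to prove both assertions by unwinding the definitions using the $q$-matroid machinery, relying on the fact that for a $q$-matroid the rank function takes integer values and $\rho(V \mid W) \in \{0, 1, \dots, \dim V\}$, with $\rho(P_0 \mid W)$ bounded above by $\rho(P_0) = \dim P_0 = 1$ (we may assume $\dim P_0 = 1$ by definition of $q$-matroid port, so $\rho(P_0) \le 1$; non-degeneracy forces $\rho(P_0) = 1$, hence $P_0$ is independent). First I would establish \emph{perfectness}: for any $V \in \mathcal{L}(P)$, either $\rho(P_0 \mid V) = 0$, placing $V \in \Gamma$, or $\rho(P_0 \mid V) \ge 1 = \rho(P_0)$, forcing $\rho(P_0 \mid V) = \rho(P_0)$ and placing $V \in \mathcal{A}$. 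Thus $\Gamma \cup \mathcal{A} = \mathcal{L}(P)$, which is exactly perfectness; this also shows $g(\mathbf{S}) = 1$ as already noted in the text.

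Next I would characterize $\Gamma$ and then $\Gamma_{\min}$. For $V \in \Gamma$, the condition $\rho(P_0 \mid V) = 0$ says $\rho(V + P_0) = \rho(V)$. I would split this into two sub-conditions. Writing $\mathcal{M}|_{V+P_0}$ for the restriction, $V$ is a \emph{basis} of $\mathcal{M}|_{V+P_0}$ precisely when $V$ is independent ($\rho(V) = \dim V$) and $\rho(V + P_0) = \rho(V)$ (since $V + P_0$ is the ambient space, $V$ not properly contained in a larger independent space is equivalent to $\rho(V+P_0) = \rho(V)$; and if $\rho(V+P_0)=\rho(V)$ then $\rho(V+P_0)=\dim V < \dim(V+P_0)$ so $V+P_0$ is dependent, confirming $V$ is a maximal independent subspace, i.e.\ a basis). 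So membership of $V$ in $\Gamma$ together with $V$ being independent is equivalent to $V$ being a basis of $\mathcal{M}|_{V+P_0}$. The remaining point is that an arbitrary $V \in \Gamma$ need not be independent, but a \emph{minimal} $V \in \Gamma$ must be; I would prove this by the submodular/exchange argument: if $V \in \Gamma$ is dependent, pick a circuit $C \le V$, pick a hyperplane $V' < V$ with $C \not\le V'$; using the circuit axiom one gets $\rho(V') = \rho(V)$, and combined with $\rho(V+P_0) = \rho(V)$ and monotonicity one deduces $\rho(V' + P_0) = \rho(V')$, so $V' \in \Gamma$, contradicting minimality. Hence every $V \in \Gamma_{\min}$ is independent and a basis of $\mathcal{M}|_{V+P_0}$.

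Then I would handle the second defining condition, $W + P_0$ independent for all $W < V$. For $V \in \Gamma_{\min}$ minimality says no proper subspace $W < V$ lies in $\Gamma$, i.e.\ $\rho(W + P_0) = \rho(W) + 1$ for all $W < V$; since $V$ is independent, $W$ is independent too, so $\rho(W) + 1 = \dim W + 1 = \dim(W + P_0)$ (using $P_0 \not\le W$, which holds as $P_0 \cap P = 0$ and $W \le P$), i.e.\ $W + P_0$ is independent. Conversely, if $V$ is a basis of $\mathcal{M}|_{V + P_0}$ and $W + P_0$ is independent for all $W < V$, then $V \in \Gamma$ (from the basis condition, $\rho(V+P_0)=\rho(V)$) and for every $W < V$ we get $\rho(W + P_0) = \dim(W + P_0) = \dim W + 1 > \dim W \ge \rho(W)$, so $\rho(W \mid P_0) \ne 0$, i.e.\ $W \notin \Gamma$; hence $V$ is minimal in $\Gamma$. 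This gives the claimed description of $\Gamma_{\min}$.

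The main obstacle I anticipate is the step showing a minimal element of $\Gamma$ is independent, since this is where one actually needs the $q$-matroid (as opposed to $q$-polymatroid) structure — specifically a circuit-elimination/exchange property in the lattice $\mathcal{L}(E)$ — and the available cryptomorphic axioms for $q$-matroids must be invoked carefully; one must make sure the hyperplane $V'$ avoiding a chosen circuit exists and that $\rho$ is constant on the relevant interval. Everything else is a direct translation between the rank equalities $\rho(V+P_0) = \rho(V)$ and the independence/basis language, using $\rho(P_0) = \dim P_0 = 1$ and $P_0 \cap P = 0$ repeatedly.
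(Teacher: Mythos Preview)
Your proposal is correct and follows the same line as the paper, which is extremely terse: it records only $0\le \rho(P_0\mid V)\le \rho(P_0)=1$ for perfectness and then states that the description of $\Gamma_{\min}$ ``follows easily by definitions of restrictions and bases of $q$-matroids,'' so you are in fact supplying the details the paper omits. One small simplification for the step you flag as delicate: rather than picking a circuit and a hyperplane avoiding it, use directly that any dependent $V$ contains an independent subspace of dimension $\rho(V)$ (a basis of $V$ in the $q$-matroid), which then sits inside some hyperplane $V'<V$ with $\rho(V')=\rho(V)$; from $\rho(V'+P_0)\le\rho(V+P_0)=\rho(V)=\rho(V')$ you get $V'\in\Gamma$, contradicting minimality, and this sidesteps the circuit--hyperplane subtlety entirely.
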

\begin{proof}
$\mathbf{S}_{P_0,P}(\mathcal{M})$ is perfect as $0\leq \rho(P_0\,|\,V)\leq \rho(P_0) =1$ for any $V\in \mathcal{L}(P)$. \textcolor{black}{The secondary statement follows easily by definitions of restrictions and bases of $q$-matroids.}
\end{proof}
Whereas linear codes always induce matroids, ${\mathbb F}_q$-linear rank-metric codes do not necessarily induce $q$-matroids. Consequently, not all ports induced by rank-metric codes are perfect, as illustrated in Example \ref{exmp}. Theorem \ref{thm:qmatroidports} is the generalization of a well-known result on matroid ports, see \cite{farre_padro2007}. However, it differs in the sense that the minimal access structure of a $q$-matroid port is not the set of $V\in \mathcal{L}(P)$ where $V+P_0$ is a circuit in the underlying $q$-matroid. It is indeed sufficient that if $V\in\mathcal{L}(P)$ such that $V+P_0$ is a circuit, then $V\in \Gamma_{\min}$. It is however not necessary, as Example \ref{exmp:circuitminimal} shows.
\begin{example} \label{exmp:circuitminimal}
Consider $\mathcal{C} = \langle M_1,M_2,M_3,M_4\rangle_{\mathbb{F}_2}$ where
\[ M_1 =
\begin{bsmallmatrix}
    1 & 0 \\
    0 & 0 \\
    1 & 1 \\
    0 & 1 \\
\end{bsmallmatrix},\:
M_2 =
\begin{bsmallmatrix}
    0 & 1 \\
    0 & 0 \\
    1 & 0 \\
    1 & 1 \\
\end{bsmallmatrix},\:
M_3 =
\begin{bsmallmatrix}
    0 & 0 \\
    1 & 0 \\
    1 & 0 \\
    1 & 0 \\
\end{bsmallmatrix},\:
M_4 =
\begin{bsmallmatrix}
    0 & 0 \\
    0 & 1 \\
    0 & 1 \\
    0 & 1 \\
\end{bsmallmatrix}.
\]
Let $P_0 = \langle \mathbf{e}_1 \rangle_{\mathbb{F}_2}$ and $P = \langle \mathbf{e}_2,\mathbf{e}_3,\mathbf{e}_4\rangle_{\mathbb{F}_2} $. Then $\rho_{\mathcal{C}}(P_0)>0$ so $\mathbf{S}_{P_0,P}(\mathcal{M}_{\mathcal{C}})$ is a perfect $q$-matroid port by Theorem \ref{thm:qmatroidports}. Furthermore, 
\[
    \Gamma_{\min} = \{\langle  \mathbf{e}_2+\mathbf{e}_4 \rangle_{\mathbb{F}_2},
    \langle  \mathbf{e}_2+\mathbf{e}_3 \rangle_{\mathbb{F}_2},
\langle  \mathbf{e}_3+\mathbf{e}_4 \rangle_{\mathbb{F}_2}
\},
\]
but the circuits of $\mathcal{M}_{\mathcal{C}}$ are $\{\langle\mathbf{e}_1, \mathbf{e}_2+\mathbf{e}_3 \rangle_{\mathbb{F}_2},\langle \mathbf{e}_2+\mathbf{e}_4,\mathbf{e}_3+\mathbf{e}_4 \rangle_{\mathbb{F}_2},$ $\langle\mathbf{e}_1,\mathbf{e}_2+\mathbf{e}_4 \rangle_{\mathbb{F}_2}$ $\langle \mathbf{e}_1+\mathbf{e}_2+\mathbf{e}_4,\mathbf{e}_3+\mathbf{e}_4 \rangle_{\mathbb{F}_2},\langle\mathbf{e}_1+\mathbf{e}_3+\mathbf{e}_4 \rangle_{\mathbb{F}_2}\}$ so $v =\langle \mathbf{e}_3+\mathbf{e}_4 \rangle_{\mathbb{F}_2} + P_0$ is not a circuit as $\langle \mathbf{e}_1+\mathbf{e}_3+\mathbf{e}_4 \rangle_{\mathbb{F}_2}\leq v$.
\end{example}

In classical secret sharing, maximum distance separable (MDS) codes give rise to perfect threshold secret sharing schemes. We conclude this subsection by demonstrating that this is also true for MRD codes under our definition of threshold structures on the poset of subspaces. 

\begin{proposition}
Let $\mathbf{S}_{P_0,P}(\mathcal{M}_{\mathcal{C}})$ be a generalized $q$-polymatroid port. Suppose $\mathcal{C}$ is an MRD code with $\dim \mathcal{C} = k$. Then $\mathbf{S}_{P_0,P}(\mathcal{M}_{\mathcal{C}})$ is a $\frac{k}{m}$-threshold structure.
\end{proposition}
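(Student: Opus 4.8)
The plan is to show that membership of a subspace $V\in\mathcal{L}(P)$ in the reconstructing structure $\Gamma$, respectively the privacy structure $\mathcal{A}$, depends only on $\dim V$, and that the threshold is exactly $k/m$. First I would recall that for an MRD code $\mathcal{C}\leq\mathcal{L}(\mathbb{F}_q^{n\times m})$ with $\dim\mathcal{C}=k$, the induced $q$-polymatroid has rank function determined by the shortening dimensions, which for MRD codes are completely explicit: for any $U\in\mathcal{L}(\mathbb{F}_q^n)$, $\dim\mathcal{C}(U)=m\max\{0,\dim U-(n-d_{\mathrm{rk}}(\mathcal{C})+1)\}$ where $d_{\mathrm{rk}}(\mathcal{C})$ meets the Singleton bound. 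Writing $r=d_{\mathrm{rk}}(\mathcal{C})$ and using $\dim\mathcal{C}=\max\{m,n\}(\min\{m,n\}-r+1)=k$, one gets from $\rho_{\mathcal{C}}(V)=(\dim\mathcal{C}-\dim\mathcal{C}(V^{\perp}))/m$ the closed form $\rho_{\mathcal{C}}(V)=\min\{\dim V,\ k/m\}$ for every $V\in\mathcal{L}(\mathbb{F}_q^n)$; this is the key computational step and the one that requires a little care with the $m$ versus $n$ bookkeeping and with the two regimes of the shortening formula.

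Given this closed form, the rest is straightforward. Since $P_0\oplus P=\mathbb{F}_q^n$, for $V\in\mathcal{L}(P)$ we have $\dim(V+P_0)=\dim V+\dim P_0$, so
\[
\rho_{\mathcal{C}}(P_0\mid V)=\rho_{\mathcal{C}}(V+P_0)-\rho_{\mathcal{C}}(V)=\min\{\dim V+\dim P_0,\ k/m\}-\min\{\dim V,\ k/m\}.
\]
If $\dim V\geq k/m$ this difference is $0$, so $V\in\Gamma$; if $\dim V<k/m$, i.e.\ $\dim V\leq\lceil k/m\rceil-1$, the difference is strictly positive (it equals $\min\{\dim P_0,\,k/m-\dim V\}>0$ since $\rho_{\mathcal{C}}(P_0)=\min\{\dim P_0,k/m\}>0$ by the non-degeneracy hypothesis $\mathcal{C}(P_0^\perp)\neq\mathcal{C}$), so $V\notin\Gamma$. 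Hence $\Gamma=\{V\in\mathcal{L}(P)\mid\dim V\geq k/m\}=\{V\in\mathcal{L}(P)\mid\dim V\geq\lceil k/m\rceil\}$, which is exactly the condition defining a $\lceil k/m\rceil$-threshold structure; I would note that the statement "$\frac{k}{m}$-threshold" is to be read with this ceiling, or equivalently that the Singleton bound forces $m\mid k$ whenever $m\geq n$ and in the relevant MRD range $k/m$ is the integer $\min\{m,n\}-r+1$ when $n\geq m$, so no ceiling is actually needed.

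The main obstacle is establishing the closed form $\rho_{\mathcal{C}}(V)=\min\{\dim V,k/m\}$ cleanly for both shapes of matrix: when $n\le m$ one uses that $\mathcal{C}(V^\perp)$ is itself MRD-like with the expected dimension, whereas when $n>m$ one should instead argue via the dual, using $\mathcal{M}_{\mathcal{C}}^*=\mathcal{M}_{\mathcal{C}^\perp}$ and the fact that $\mathcal{C}^\perp$ is again MRD with $d_{\mathrm{rk}}(\mathcal{C}^\perp)=\min\{m,n\}-r+2$, then transporting the rank formula through $\rho^*(V)=\dim V-\rho(E)+\rho(V^\perp)$. An alternative, and perhaps cleaner, route that avoids the case split is to invoke the known characterization that a representable $q$-polymatroid is "uniform" (rank function $\min\{\dim V,k/m\}$) precisely when the representing code is MRD; if that result is available in the cited literature on $q$-polymatroids of rank-metric codes it can be cited directly, collapsing this step to a one-line reference. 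Either way, once the uniform rank formula is in hand the threshold conclusion is immediate from the displayed computation above.
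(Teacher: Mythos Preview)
Your proposal is correct and follows essentially the same approach as the paper: both arguments reduce the claim to the closed form $\rho_{\mathcal{C}}(V)=\min\{\dim V,\,k/m\}$ for MRD codes, after which the threshold description of $\Gamma$ is immediate. The paper in fact takes exactly your ``alternative, cleaner route'' and simply cites this uniform rank formula from \cite[Cor.~6.6]{Gorla_2019}, without spelling out the shortening computation or the $m\lessgtr n$ case split; your additional remarks on the ceiling and on the shape bookkeeping are accurate but not needed once the citation is in place.
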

\begin{proof}
This follows from the induced rank function of an MRD code being of the form $\rho_\mathcal{C}(V) = \min\{\dim V, \frac{\dim (\mathcal{C})}{m} \}$ for all $V\in \mathcal{L}(E)$ \cite[Cor. 6.6]{Gorla_2019}.
\end{proof}

\subsection{The entropy function of representable $q$-polymatroids}

It is natural to consider an information theoretic description of the rank function of a generalized $q$-polymatroid port.
Theorem \ref{thm:entropy} and Lemma \ref{lem:entropy} gives a $q$-analogue to the classical {\em entropic polymatroids} \cite{10.1007/s10623-020-00811-1,Abbe_2019}, defined for representable $q$-polymatroids.

Let $H(\mathbf{X}_1,\ldots,\mathbf{X}_n)$ denote the entropy of a set of discrete random variables $\mathbf{X}_1,\ldots,\mathbf{X}_n$. Additionally, the conditional entropy of $\mathbf{X}_1$ given $\mathbf{X}_2$ is denoted $H(\mathbf{X}_1\,|\, \mathbf{X}_2) = H(\mathbf{X}_1,\mathbf{X}_2) - H(\mathbf{X}_2)$.

\begin{theorem} \label{thm:entropy}
Let $\mathcal{C}\in \mathcal{L}(\mathbb{F}_q^{n\times m})$ be a rank-metric code and $V\in \mathcal{L}(\mathbb{F}_q^n)$. Let $\pi_V \colon \mathcal{C}\rightarrow \mathcal{C}/\mathcal{C}(V^{\perp \mathbb{F}_q^n})$ denote the canonical quotient map. Let $\mathbf{Z}_V$ be the random variable on $\mathcal{C}/\mathcal{C}(V^{\perp\mathbb{F}_q^n})$ induced by the uniform distribution on $\mathcal{C}$ and $\pi_V$. Then
$$H(\mathbf{Z}_V)= m\log(q)\rho_{\mathcal{C}}(V).$$
\end{theorem}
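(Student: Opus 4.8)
The plan is to compute $H(\mathbf{Z}_V)$ directly from the definition of entropy for the uniform distribution on $\mathcal{C}$ pushed forward through the linear surjection $\pi_V$. The key observation is that $\pi_V$ is an $\mathbb{F}_q$-linear map, so all of its fibers have the same cardinality, namely $|\mathcal{C}(V^{\perp})|$. Hence the pushforward measure $\mathbf{Z}_V$ is the uniform distribution on the quotient space $\mathcal{C}/\mathcal{C}(V^{\perp})$, which has $q^{\dim \mathcal{C} - \dim \mathcal{C}(V^{\perp})}$ elements.

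First I would record that for a uniform random variable on a set of size $N$, the entropy is $\log N$. Applying this with $N = |\mathcal{C}/\mathcal{C}(V^{\perp})| = q^{\dim\mathcal{C} - \dim\mathcal{C}(V^{\perp})}$ gives
$$
H(\mathbf{Z}_V) = \log\!\left(q^{\dim\mathcal{C} - \dim\mathcal{C}(V^{\perp})}\right) = \bigl(\dim\mathcal{C} - \dim\mathcal{C}(V^{\perp})\bigr)\log q.
$$
Then I would invoke the definition of the induced rank function, $\rho_{\mathcal{C}}(V) = (\dim\mathcal{C} - \dim\mathcal{C}(V^{\perp}))/m$, to rewrite $\dim\mathcal{C} - \dim\mathcal{C}(V^{\perp}) = m\,\rho_{\mathcal{C}}(V)$, which yields $H(\mathbf{Z}_V) = m\log(q)\,\rho_{\mathcal{C}}(V)$ as claimed.

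The only point requiring a small argument is the claim that the pushforward of the uniform distribution on $\mathcal{C}$ under the linear quotient map is uniform on the image: this is because $\pi_V$ is a group homomorphism of finite abelian groups, so each coset of $\ker\pi_V = \mathcal{C}(V^{\perp})$ has exactly $|\mathcal{C}(V^{\perp})|$ preimages, and thus each element of $\mathcal{C}/\mathcal{C}(V^{\perp})$ receives probability $|\mathcal{C}(V^{\perp})|/|\mathcal{C}| = 1/|\mathcal{C}/\mathcal{C}(V^{\perp})|$. I do not anticipate a genuine obstacle here; the proof is essentially a direct unwinding of definitions, and the main "content" is simply identifying the image of $\pi_V$ as having dimension $\dim\mathcal{C} - \dim\mathcal{C}(V^{\perp})$ and recognizing this as $m\rho_{\mathcal{C}}(V)$.
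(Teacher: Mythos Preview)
Your proposal is correct and follows essentially the same approach as the paper: both compute $H(\mathbf{Z}_V)$ by observing that every fiber of the linear map $\pi_V$ has size $q^{\dim \mathcal{C}(V^\perp)}$, so $\mathbf{Z}_V$ is uniform on a set of size $q^{\dim \mathcal{C}-\dim \mathcal{C}(V^\perp)}$, and then invoke the definition of $\rho_{\mathcal{C}}$. The paper carries out the entropy sum explicitly, whereas you shortcut via the formula $H=\log N$ for a uniform distribution, but the content is identical.
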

\begin{proof}
Let $\bot = \bot \mathbb{F}_q^n$. Then
    \begin{align*}
     H(\mathbf{Z}_V) &= -\:\:\sum_{\mathclap{X\in \mathcal{C}/\mathcal{C}(V^\perp)}}\:\: \mathbb{P}(X) \log (\mathbb{P}(X)) \\
    &= -\:\:\sum_{\mathclap{X\in \mathcal{C}/\mathcal{C}(V^\perp)}}\:\: \frac{q^{\dim \ker \pi_V}}{q^{\dim (\mathcal{C})}} \log \left( \frac{q^{\dim \ker \pi_V}}{q^{\dim (\mathcal{C})}}\right)\\
    &= -q^{\dim \mathcal{C}/\mathcal{C}(V^\perp)}\frac{q^{\dim \ker \pi_V}}{q^{\dim (\mathcal{C})}} \log \left( \frac{q^{\dim \ker \pi_V}}{q^{\dim (\mathcal{C})}}\right) \\
    &= (\dim \mathcal{C} - \dim \mathcal{C}(V^\perp)) \log(q)\\ 
    &= m\log(q)\rho_C(V).\qedhere
\end{align*}
\end{proof}
\begin{lemma} \label{lem:entropy} Assume the hypothesis of Theorem \ref{thm:entropy}. Let $V_1,$ $\ldots,V_\ell \in \mathcal{L}(\mathbb{F}_q^n)$. Then
$$
H(\mathbf{Z}_{V_1},\ldots,\mathbf{Z}_{V_\ell}) = H( \mathbf{Z}_{V_1+ \cdots + V_\ell}).$$
\end{lemma}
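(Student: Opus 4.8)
The plan is to reduce the statement to a single equality of subspaces of $\mathcal{C}$, namely
\[
\bigcap_{i=1}^{\ell} \mathcal{C}(V_i^{\perp}) = \mathcal{C}\bigl((V_1+\cdots+V_\ell)^{\perp}\bigr),
\]
where $\perp=\perp\mathbb{F}_q^n$. Granting this, the joint random variable $(\mathbf{Z}_{V_1},\ldots,\mathbf{Z}_{V_\ell})$ is exactly the image of the uniform distribution on $\mathcal{C}$ under the canonical map $\mathcal{C}\to \bigoplus_i \mathcal{C}/\mathcal{C}(V_i^{\perp})$, whose kernel is $\bigcap_i \mathcal{C}(V_i^{\perp})$; so its image is $\mathcal{C}/\bigl(\bigcap_i \mathcal{C}(V_i^{\perp})\bigr)$, and the induced distribution on that image is uniform (each fibre over the image has the same size $q^{\dim \bigcap_i \mathcal{C}(V_i^{\perp})}$). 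The entropy of a uniform distribution on a set of size $q^{t}$ is $t\log(q)$, so $H(\mathbf{Z}_{V_1},\ldots,\mathbf{Z}_{V_\ell}) = \bigl(\dim\mathcal{C} - \dim\bigcap_i \mathcal{C}(V_i^{\perp})\bigr)\log(q)$, and likewise $H(\mathbf{Z}_{V_1+\cdots+V_\ell}) = \bigl(\dim\mathcal{C} - \dim\mathcal{C}((V_1+\cdots+V_\ell)^\perp)\bigr)\log(q)$ by the computation already carried out in the proof of Theorem~\ref{thm:entropy}. The two are equal precisely when the subspace identity above holds. I would phrase the uniformity/entropy bookkeeping briefly, since it mirrors the displayed calculation in Theorem~\ref{thm:entropy}.

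It therefore remains to prove the subspace identity, and this is where the real content lies. By induction on $\ell$ it suffices to treat $\ell=2$: show $\mathcal{C}(V^{\perp})\cap\mathcal{C}(W^{\perp}) = \mathcal{C}((V+W)^{\perp})$ for $V,W\in\mathcal{L}(\mathbb{F}_q^n)$. Unwinding definitions, $\mathcal{C}(U) = \{X\in\mathcal{C} : \mathrm{colsp}(X)\leq U\}$, so for a codeword $X$ we have $X\in\mathcal{C}(V^\perp)\cap\mathcal{C}(W^\perp)$ iff $\mathrm{colsp}(X)\leq V^\perp$ and $\mathrm{colsp}(X)\leq W^\perp$, i.e. $\mathrm{colsp}(X)\leq V^\perp\cap W^\perp$; and $X\in\mathcal{C}((V+W)^\perp)$ iff $\mathrm{colsp}(X)\leq (V+W)^\perp$. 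So the identity is immediate from the elementary linear-algebra fact $(V+W)^{\perp} = V^{\perp}\cap W^{\perp}$, valid for the non-degenerate symmetric bilinear form on $\mathbb{F}_q^n$. The general-$\ell$ version follows the same way from $(V_1+\cdots+V_\ell)^\perp = \bigcap_i V_i^\perp$.

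The main obstacle, such as it is, is conceptual rather than computational: one must recognize that $(\mathbf{Z}_{V_1},\ldots,\mathbf{Z}_{V_\ell})$ as a joint variable carries exactly as much information as the quotient of $\mathcal{C}$ by the \emph{intersection} of the kernels, so that the whole lemma collapses to the behaviour of shortening under sums of subspaces. Once that reduction is in place, no estimate is needed — every step is an exact equality, and the argument is a short induction resting on the de Morgan-type law for orthogonal complements together with the uniformity of the push-forward of the uniform distribution on $\mathcal{C}$. I would present it as: (i) reduce to the $\ell=2$ subspace identity; (ii) prove that identity via $\mathrm{colsp}$ and $(V+W)^\perp = V^\perp\cap W^\perp$; (iii) deduce the kernel of the joint map and conclude by the uniform-entropy computation, referencing Theorem~\ref{thm:entropy} for the final line.
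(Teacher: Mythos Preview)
Your proposal is correct and follows essentially the same route as the paper: define the diagonal map $\sigma\colon\mathcal{C}\to\prod_i \mathcal{C}/\mathcal{C}(V_i^\perp)$, observe that $\ker\sigma=\bigcap_i\mathcal{C}(V_i^\perp)=\mathcal{C}\bigl((\sum_i V_i)^\perp\bigr)=\ker\pi_{\sum_i V_i}$, and conclude via uniformity of the push-forward. The paper asserts the kernel identity in one line without the explicit $(V+W)^\perp=V^\perp\cap W^\perp$ justification or the induction to $\ell=2$ that you spell out, but the content is the same.
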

\begin{proof}
Let $\bot = \bot \mathbb{F}_q^n$. Let $\sigma$ be the map $\sigma \colon \mathcal{C} \rightarrow \prod^{\ell}_{i=1}\mathcal{C}/\mathcal{C}(V_i^\perp)$ defined by $$\sigma(X)= (\pi_{V_1}(X),\ldots,\pi_{V_\ell}(X))$$ for all $X \in \mathcal{C}$. This map is clearly $\mathbb{F}_q$-linear and satisfies $\ker \sigma = \cap^{\ell}_{i=1}\ker \pi_{V_i} = \ker \pi_{\sum_{i=1}^\ell V_i}$. The result then follows by the definition of joint entropy as $\mathrm{rk}\, \sigma = \mathrm{rk}\, \pi_{\sum^{\ell}_{i=1} V_i}$ and for any $Y=(Y_1,\ldots,Y_\ell)\in \mathrm{im}\, \sigma$ it holds that $\mathbb{P}(Y) = q^{\dim \ker \sigma - \dim \mathcal{C}}$.
\end{proof}
From Theorem \ref{thm:entropy} and Lemma \ref{lem:entropy} we immediately derive Corollary \ref{cor:entropy} showing that the conditional rank of representable $q$-polymatroids is well-defined in terms of conditional entropy.
\begin{corollary}\label{cor:entropy} Assume the hypothesis of Theorem \ref{thm:entropy}. Let  $V,W\in \mathcal{L}(\mathbb{F}_q^n)$. Then $$H(\mathbf{Z}_{W} \,|\, \mathbf{Z}_{V})=m \log(q)\rho_{\mathcal{C}}(W\, |\, V).$$
\end{corollary}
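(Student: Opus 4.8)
The plan is to derive Corollary \ref{cor:entropy} purely formally from Theorem \ref{thm:entropy} and Lemma \ref{lem:entropy}, unwinding both the definition of conditional entropy and the definition of conditional rank. The key identities are $H(\mathbf{Z}_W \mid \mathbf{Z}_V) = H(\mathbf{Z}_W, \mathbf{Z}_V) - H(\mathbf{Z}_V)$ on the entropy side, and $\rho_\mathcal{C}(W \mid V) = \rho_\mathcal{C}(W + V) - \rho_\mathcal{C}(V)$ on the rank side, so it suffices to match the joint entropy $H(\mathbf{Z}_W, \mathbf{Z}_V)$ with $m\log(q)\,\rho_\mathcal{C}(W+V)$.

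First I would apply Lemma \ref{lem:entropy} with $\ell = 2$, $V_1 = W$, $V_2 = V$, which gives $H(\mathbf{Z}_W, \mathbf{Z}_V) = H(\mathbf{Z}_{W+V})$. Then I would apply Theorem \ref{thm:entropy} to the subspace $W + V$ to get $H(\mathbf{Z}_{W+V}) = m\log(q)\,\rho_\mathcal{C}(W+V)$, and again to $V$ to get $H(\mathbf{Z}_V) = m\log(q)\,\rho_\mathcal{C}(V)$. Subtracting,
\[
H(\mathbf{Z}_W \mid \mathbf{Z}_V) = m\log(q)\bigl(\rho_\mathcal{C}(W+V) - \rho_\mathcal{C}(V)\bigr) = m\log(q)\,\rho_\mathcal{C}(W \mid V),
\]
where the last equality is just the definition of conditional rank. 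This completes the argument.

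There is essentially no obstacle here: the corollary is a one-line consequence of the two preceding results combined with the two definitions (conditional entropy and conditional rank), so the only thing to be careful about is bookkeeping — namely that the addition of subspaces in Lemma \ref{lem:entropy} corresponds exactly to the $W + V$ appearing inside $\rho_\mathcal{C}(W \mid V)$, and that the scaling factor $m\log(q)$ is common to both entropy expressions and hence survives the subtraction. I would simply state the three-line chain of equalities and cite Theorem \ref{thm:entropy} and Lemma \ref{lem:entropy} at the appropriate points.
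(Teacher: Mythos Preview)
Your proposal is correct and matches the paper's approach exactly: the paper does not even write out a proof, stating only that the corollary follows immediately from Theorem~\ref{thm:entropy} and Lemma~\ref{lem:entropy}, which is precisely the three-line derivation you describe.
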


\section{Conclusion}
We have generalized the concepts of secret sharing and matroid ports to $q$-polymatroids and shown how rank-metric codes give rise to secret sharing schemes within this framework. One possible application of this theory is to random linear wiretap networks \cite{Onsimilarities,unifying}, which can be treated as a special case of the theory developed in this paper. The link between ports of representable $q$-polymatroids and wiretap coset coding schemes for random linear network coding will be made more explicit in the full version of this paper. This in turn could be useful for private information retrieval over random linear networks \cite{tajeddine-network}.

\renewcommand{\baselinestretch}{0.98}

\bibliographystyle{IEEEtran}
\bibliography{bibliography}

\end{document}